\tikzstyle{none}=[inner sep=0pt]
\definecolor{hexcolor0xff0000}{rgb}{1.000,0.000,0.000}
\definecolor{hexcolor0x000000}{rgb}{0.000,0.000,0.000}
\definecolor{hexcolor0x00ff00}{rgb}{0.000,1.000,0.000}
\definecolor{hexcolor0x000000}{rgb}{0.000,0.000,0.000}
\definecolor{hexcolor0xffff00}{rgb}{1.000,1.000,0.000}
\definecolor{hexcolor0xffffff}{rgb}{1.000,1.000,1.000}
\tikzstyle{rn}=[circle,fill=hexcolor0xff0000,draw=hexcolor0x000000,line width=0.8 pt]
\tikzstyle{gn}=[circle,fill=hexcolor0x00ff00,draw=hexcolor0x000000,line width=0.8 pt]
\tikzstyle{yn}=[circle,fill=hexcolor0xffff00,draw=hexcolor0x000000,line width=0.8 pt]
\tikzstyle{wn}=[circle,fill=hexcolor0xffffff,draw=hexcolor0x000000,line width=0.8 pt]
\tikzstyle{wnthick}=[circle,fill=hexcolor0xffffff,draw=hexcolor0x000000,line width=2.500]
\tikzstyle{simple}=[-,draw=hexcolor0x000000,line width=2.000]
\tikzstyle{arrow}=[-,draw=hexcolor0x000000,postaction={decorate},decoration={markings,mark=at position .5 with {\arrow{>}}},line width=2.000]
\tikzstyle{tick}=[-,draw=hexcolor0x000000,postaction={decorate},decoration={markings,mark=at position .5 with {\draw (0,-0.1) -- (0,0.1);}},line width=2.000]
\tikzstyle{halfthickness}=[-,draw=hexcolor0x000000,line width=0.500]
\tikzstyle{thick}=[-,draw=hexcolor0x000000,line width=2.500]
\tikzstyle{thicker}=[-,draw=hexcolor0x000000,line width=4.000]
\tikzstyle{env}=[copoint,regular polygon rotate=0,minimum width=0.2cm, fill=black]
\tikzstyle{probs}=[shape=semicircle,fill=white,draw=black,shape border rotate=180,minimum width=1.2cm]
\tikzstyle{every picture}=[baseline=-0.25em,scale=0.5]
\tikzstyle{dotpic}=[] % for backwards-compatibility
\tikzstyle{diredges}=[every to/.style={diredge}]
\tikzstyle{math matrix}=[matrix of math nodes,left delimiter=(,right delimiter=),inner sep=2pt,column sep=1em,row sep=0.5em,nodes={inner sep=0pt},text height=1.5ex, text depth=0.25ex]
\tikzstyle{inline text}=[text height=1.5ex, text depth=0.25ex,yshift=0.5mm]
\tikzstyle{label}=[font=\footnotesize,text height=1.5ex, text depth=0.25ex,yshift=0.5mm]
\tikzstyle{left label}=[label,anchor=east,xshift=1.5mm]
\tikzstyle{right label}=[label,anchor=west,xshift=-1.5mm]
\tikzstyle{braceedge}=[decorate,decoration={brace,amplitude=2mm,raise=-1mm}]
\tikzstyle{small braceedge}=[decorate,decoration={brace,amplitude=1mm,raise=-1mm}]
\tikzstyle{doubled}=[line width=1.6pt] % set the line width for all doubled (quantum) maps/wires
\tikzstyle{boldedge}=[doubled,shorten <=-0.17mm,shorten >=-0.17mm]
\tikzstyle{boldedgegray}=[doubled,gray,shorten <=-0.17mm,shorten >=-0.17mm]
\tikzstyle{semidoubled}=[line width=1.4pt] % set the line width for all doubled (quantum) maps/wires
\tikzstyle{semiboldedgegray}=[semidoubled,gray,shorten <=-0.17mm,shorten >=-0.17mm]
\tikzstyle{boldedgedashed}=[very thick,dashed,shorten <=-0.17mm,shorten >=-0.17mm]
\tikzstyle{vboldedgedashed}=[doubled,dashed,shorten <=-0.17mm,shorten >=-0.17mm]
\tikzstyle{left hook arrow}=[left hook-latex]
\tikzstyle{right hook arrow}=[right hook-latex]
\tikzstyle{sembracket}=[line width=0.5pt,shorten <=-0.07mm,shorten >=-0.07mm]
\tikzstyle{causal edge}=[->,thick,gray]
\tikzstyle{causal nondir}=[thick,gray]
\tikzstyle{timeline}=[thick,gray, dashed]
\tikzstyle{cedge}=[<->,thick,gray!70!white]
\tikzstyle{empty diagram}=[draw=gray!40!white,dashed,shape=rectangle,minimum width=1cm,minimum height=1cm]
\tikzstyle{empty diagram small}=[draw=gray!50!white,dashed,shape=rectangle,minimum width=0.6cm,minimum height=0.5cm]
\tikzstyle{dot}=[inner sep=0mm,minimum width=2mm,minimum height=2mm,draw,shape=circle]
\tikzstyle{ddot}=[inner sep=0mm, doubled, minimum width=2.5mm,minimum height=2.5mm,draw,shape=circle]
\tikzstyle{black dot}=[dot,fill=black]
\tikzstyle{white dot}=[dot,fill=white,,text depth=-0.2mm]
\tikzstyle{green dot}=[white dot] % for backwards-compatibility
\tikzstyle{gray dot}=[dot,fill=gray!40!white,,text depth=-0.2mm]
 \tikzstyle{red dot}=[dot,fill=red,font=\color{white}]
\tikzstyle{black ddot}=[ddot,fill=black]
\tikzstyle{white ddot}=[ddot,fill=white]
\tikzstyle{gray ddot}=[ddot,fill=gray!40!white]
\tikzstyle{gray edge}=[gray!40!white]
\tikzstyle{small dot}=[inner sep=0.5mm,minimum width=0pt,minimum height=0pt,draw,shape=circle]
\tikzstyle{small black dot}=[small dot,fill=black]
\tikzstyle{small white dot}=[small dot,fill=white]
\tikzstyle{small gray dot}=[small dot,fill=gray!40!white]
\tikzstyle{causal dot}=[inner sep=0.4mm,minimum width=0pt,minimum height=0pt,draw=white,shape=circle,fill=gray!40!white]
\tikzstyle{phase dimensions}=[minimum size=5mm,font=\footnotesize,rectangle,rounded corners=2.5mm,inner sep=0.2mm,outer sep=-2mm]
\tikzstyle{dphase dimensions}=[minimum size=5mm,font=\footnotesize,rectangle,rounded corners=2.5mm,inner sep=0.2mm,outer sep=-2mm]
\tikzstyle{white phase dot}=[dot,fill=white,phase dimensions]
\tikzstyle{white phase ddot}=[ddot,fill=white,dphase dimensions]
\tikzstyle{white rect ddot}=[draw=black,fill=white,doubled,minimum size=5mm,font=\footnotesize,rectangle,rounded corners=2.5mm,inner sep=0.2mm]
\tikzstyle{gray rect ddot}=[draw=black,fill=gray!40!white,doubled,minimum size=6mm,font=\footnotesize,rectangle,rounded corners=3mm]
\tikzstyle{gray phase dot}=[dot,fill=gray!40!white,phase dimensions]
\tikzstyle{gray phase ddot}=[ddot,fill=gray!40!white,dphase dimensions]
\tikzstyle{grey phase dot}=[gray phase dot]
\tikzstyle{grey phase ddot}=[gray phase ddot]
\tikzstyle{small phase dimensions}=[minimum size=4mm,font=\tiny,rectangle,rounded corners=2mm,inner sep=0.2mm,outer sep=-2mm]
\tikzstyle{small dphase dimensions}=[minimum size=4mm,font=\tiny,rectangle,rounded corners=2mm,inner sep=0.2mm,outer sep=-2mm]
\tikzstyle{small gray phase dot}=[dot,fill=gray!40!white,small phase dimensions]
\tikzstyle{small gray phase ddot}=[ddot,fill=gray!40!white,small dphase dimensions]
\tikzstyle{small map}=[draw,shape=rectangle,minimum height=4mm,minimum width=4mm,fill=white]
\tikzstyle{cnot}=[fill=white,shape=circle,inner sep=-1.4pt]
\tikzstyle{asym hadamard}=[fill=white,draw,shape=NEbox,inner sep=0.6mm,font=\footnotesize,minimum height=4mm]
\tikzstyle{asym hadamard conj}=[fill=white,draw,shape=NWbox,inner sep=0.6mm,font=\footnotesize,minimum height=4mm]
\tikzstyle{asym hadamard dag}=[fill=white,draw,shape=SEbox,inner sep=0.6mm,font=\footnotesize,minimum height=4mm]
\tikzstyle{hadamard}=[fill=white,draw,inner sep=0.6mm,font=\footnotesize,minimum height=4mm,minimum width=4mm]
\tikzstyle{small hadamard}=[fill=white,draw,inner sep=0.6mm,minimum height=1.5mm,minimum width=1.5mm]
\tikzstyle{dhadamard}=[hadamard,doubled]
\tikzstyle{small dhadamard}=[small hadamard,doubled]
\tikzstyle{small dhadamard rotate}=[small hadamard,doubled,rotate=45]
\tikzstyle{antipode}=[white dot,inner sep=0.3mm,font=\footnotesize]
\tikzstyle{scalar}=[diamond,draw,inner sep=0.5pt,font=\small]
\tikzstyle{dscalar}=[diamond,doubled, draw,inner sep=0.5pt,font=\small]
\tikzstyle{small box}=[rectangle,inline text,fill=white,draw,minimum height=5mm,yshift=-0.5mm,minimum width=5mm,font=\small]
\tikzstyle{small gray box}=[small box,fill=gray!30]
\tikzstyle{medium box}=[rectangle,inline text,fill=white,draw,minimum height=5mm,yshift=-0.5mm,minimum width=10mm,font=\small]
\tikzstyle{square box}=[small box] % for backwards-compatibility
\tikzstyle{medium gray box}=[small box,fill=gray!30]
\tikzstyle{semilarge box}=[rectangle,inline text,fill=white,draw,minimum height=5mm,yshift=-0.5mm,minimum width=12.5mm,font=\small]
\tikzstyle{large box}=[rectangle,inline text,fill=white,draw,minimum height=5mm,yshift=-0.5mm,minimum width=15mm,font=\small]
\tikzstyle{large gray box}=[small box,fill=gray!30]
\tikzstyle{Bayes box}=[rectangle,fill=black,draw, minimum height=3mm, minimum width=3mm]
\tikzstyle{gray square point}=[small box,fill=gray!50]
\tikzstyle{dphase box white}=[dhadamard]
\tikzstyle{dphase box gray}=[dhadamard,fill=gray!50!white]
\tikzstyle{point}=[regular polygon,regular polygon sides=3,draw,scale=0.75,inner sep=-0.5pt,minimum width=9mm,fill=white,regular polygon rotate=180]
\tikzstyle{copoint}=[regular polygon,regular polygon sides=3,draw,scale=0.75,inner sep=-0.5pt,minimum width=9mm,fill=white]
\tikzstyle{dpoint}=[point,doubled]
\tikzstyle{dcopoint}=[copoint,doubled]
\tikzstyle{wide copoint}=[fill=white,draw,shape=isosceles triangle,shape border rotate=90,isosceles triangle stretches=true,inner sep=0pt,minimum width=1.5cm,minimum height=6.12mm]
\tikzstyle{wide point}=[fill=white,draw,shape=isosceles triangle,shape border rotate=-90,isosceles triangle stretches=true,inner sep=0pt,minimum width=1.5cm,minimum height=6.12mm,yshift=-0.0mm]
\tikzstyle{wide point plus}=[fill=white,draw,shape=isosceles triangle,shape border rotate=-90,isosceles triangle stretches=true,inner sep=0pt,minimum width=1.74cm,minimum height=7mm,yshift=-0.0mm]
\tikzstyle{wide dpoint}=[fill=white,doubled,draw,shape=isosceles triangle,shape border rotate=-90,isosceles triangle stretches=true,inner sep=0pt,minimum width=1.5cm,minimum height=6.12mm,yshift=-0.0mm]
\tikzstyle{wide dcopoint}=[fill=white,doubled,draw,shape=isosceles triangle,shape border rotate=90,isosceles triangle stretches=true,inner sep=0pt,minimum width=1.5cm,minimum height=6.12mm,yshift=-0.0mm]
\tikzstyle{tinypoint}=[regular polygon,regular polygon sides=3,draw,scale=0.55,inner sep=-0.15pt,minimum width=6mm,fill=white,regular polygon rotate=180]
\tikzstyle{white point}=[point]
\tikzstyle{white dpoint}=[dpoint]
\tikzstyle{green point}=[white point] % for backwards-compatibility
\tikzstyle{white copoint}=[copoint]
\tikzstyle{gray point}=[point,fill=gray!40!white]
\tikzstyle{gray dpoint}=[gray point,doubled]
\tikzstyle{red point}=[gray point] % for backwards-compatibility
\tikzstyle{gray copoint}=[copoint,fill=gray!40!white]
\tikzstyle{gray dcopoint}=[gray copoint,doubled]
\tikzstyle{white point guide}=[regular polygon,regular polygon sides=3,font=\scriptsize,draw,scale=0.65,inner sep=-0.5pt,minimum width=9mm,fill=white,regular polygon rotate=180]
\tikzstyle{black point}=[point,fill=black,font=\color{white}]
\tikzstyle{black copoint}=[copoint,fill=black,font=\color{white}]
\tikzstyle{tiny gray point}=[tinypoint,fill=gray!40!white]
\tikzstyle{diredge}=[->]
\tikzstyle{ddiredge}=[<->]
\tikzstyle{rdiredge}=[<-]
\tikzstyle{thickdiredge}=[->, very thick]
\tikzstyle{pointer edge}=[->,very thick,gray]
\tikzstyle{pointer edge part}=[very thick,gray]
\tikzstyle{dashed edge}=[dashed]
\tikzstyle{thick dashed edge}=[very thick,dashed]
\tikzstyle{thick gray dashed edge}=[thick dashed edge,gray!40]
\tikzstyle{thick map edge}=[very thick,|->]
\newcommand{\boxshape}[3]{%
\pgfdeclareshape{#1}{
\inheritsavedanchors[from=rectangle] % this is nearly a rectangle
\inheritanchorborder[from=rectangle]
\inheritanchor[from=rectangle]{center}
\inheritanchor[from=rectangle]{north}
\inheritanchor[from=rectangle]{south}
\inheritanchor[from=rectangle]{west}
\inheritanchor[from=rectangle]{east}
% ... and possibly more
\backgroundpath{% this is new
% store lower right in xa/ya and upper right in xb/yb
\southwest \pgf@xa=\pgf@x \pgf@ya=\pgf@y
\northeast \pgf@xb=\pgf@x \pgf@yb=\pgf@y

\@tempdima=#2
\@tempdimb=#3

\pgfpathmoveto{\pgfpoint{\pgf@xa - 5pt + \@tempdima}{\pgf@ya}}
\pgfpathlineto{\pgfpoint{\pgf@xa - 5pt - \@tempdima}{\pgf@yb}}
\pgfpathlineto{\pgfpoint{\pgf@xb + 5pt + \@tempdimb}{\pgf@yb}}
\pgfpathlineto{\pgfpoint{\pgf@xb + 5pt - \@tempdimb}{\pgf@ya}}
\pgfpathlineto{\pgfpoint{\pgf@xa - 5pt + \@tempdima}{\pgf@ya}}
\pgfpathclose
}
}}
\tikzstyle{cloud}=[shape=cloud,draw,minimum width=1.5cm,minimum height=1.5cm]
\tikzstyle{map}=[draw,shape=NEbox,inner sep=2pt,minimum height=6mm,fill=white]
\tikzstyle{dashedmap}=[draw,dashed,shape=NEbox,inner sep=2pt,minimum height=6mm,fill=white]
\tikzstyle{mapdag}=[draw,shape=SEbox,inner sep=2pt,minimum height=6mm,fill=white]
\tikzstyle{mapadj}=[draw,shape=SEbox,inner sep=2pt,minimum height=6mm,fill=white]
\tikzstyle{maptrans}=[draw,shape=SWbox,inner sep=2pt,minimum height=6mm,fill=white]
\tikzstyle{mapconj}=[draw,shape=NWbox,inner sep=2pt,minimum height=6mm,fill=white]
\tikzstyle{medium map}=[draw,shape=NEbox,inner sep=2pt,minimum height=6mm,fill=white,minimum width=7mm]
\tikzstyle{medium map dag}=[draw,shape=SEbox,inner sep=2pt,minimum height=6mm,fill=white,minimum width=7mm]
\tikzstyle{medium map adj}=[draw,shape=SEbox,inner sep=2pt,minimum height=6mm,fill=white,minimum width=7mm]
\tikzstyle{medium map trans}=[draw,shape=SWbox,inner sep=2pt,minimum height=6mm,fill=white,minimum width=7mm]
\tikzstyle{medium map conj}=[draw,shape=NWbox,inner sep=2pt,minimum height=6mm,fill=white,minimum width=7mm]
\tikzstyle{semilarge map}=[draw,shape=NEbox,inner sep=2pt,minimum height=6mm,fill=white,minimum width=9.5mm]
\tikzstyle{semilarge map trans}=[draw,shape=SWbox,inner sep=2pt,minimum height=6mm,fill=white,minimum width=9.5mm]
\tikzstyle{semilarge map adj}=[draw,shape=SEbox,inner sep=2pt,minimum height=6mm,fill=white,minimum width=9.5mm]
\tikzstyle{semilarge map dag}=[draw,shape=SEbox,inner sep=2pt,minimum height=6mm,fill=white,minimum width=9.5mm]
\tikzstyle{semilarge map conj}=[draw,shape=NWbox,inner sep=2pt,minimum height=6mm,fill=white,minimum width=9.5mm]
\tikzstyle{large map}=[draw,shape=NEbox,inner sep=2pt,minimum height=6mm,fill=white,minimum width=12mm]
\tikzstyle{large map conj}=[draw,shape=NWbox,inner sep=2pt,minimum height=6mm,fill=white,minimum width=12mm]
\tikzstyle{very large map}=[draw,shape=NEbox,inner sep=2pt,minimum height=6mm,fill=white,minimum width=17mm]
\tikzstyle{medium dmap}=[draw,doubled,shape=NEbox,inner sep=2pt,minimum height=6mm,fill=white,minimum width=7mm]
\tikzstyle{medium dmap dag}=[draw,doubled,shape=SEbox,inner sep=2pt,minimum height=6mm,fill=white,minimum width=7mm]
\tikzstyle{medium dmap adj}=[draw,doubled,shape=SEbox,inner sep=2pt,minimum height=6mm,fill=white,minimum width=7mm]
\tikzstyle{medium dmap trans}=[draw,doubled,shape=SWbox,inner sep=2pt,minimum height=6mm,fill=white,minimum width=7mm]
\tikzstyle{medium dmap conj}=[draw,doubled,shape=NWbox,inner sep=2pt,minimum height=6mm,fill=white,minimum width=7mm]
\tikzstyle{semilarge dmap}=[draw,doubled,shape=NEbox,inner sep=2pt,minimum height=6mm,fill=white,minimum width=9.5mm]
\tikzstyle{semilarge dmap trans}=[draw,doubled,shape=SWbox,inner sep=2pt,minimum height=6mm,fill=white,minimum width=9.5mm]
\tikzstyle{semilarge dmap adj}=[draw,doubled,shape=SEbox,inner sep=2pt,minimum height=6mm,fill=white,minimum width=9.5mm]
\tikzstyle{semilarge dmap dag}=[draw,doubled,shape=SEbox,inner sep=2pt,minimum height=6mm,fill=white,minimum width=9.5mm]
\tikzstyle{semilarge dmap conj}=[draw,doubled,shape=NWbox,inner sep=2pt,minimum height=6mm,fill=white,minimum width=9.5mm]
\tikzstyle{large dmap}=[draw,doubled,shape=NEbox,inner sep=2pt,minimum height=6mm,fill=white,minimum width=12mm]
\tikzstyle{large dmap conj}=[draw,doubled,shape=NWbox,inner sep=2pt,minimum height=6mm,fill=white,minimum width=12mm]
\tikzstyle{large dmap trans}=[draw,doubled,shape=SWbox,inner sep=2pt,minimum height=6mm,fill=white,minimum width=12mm]
\tikzstyle{large dmap adj}=[draw,doubled,shape=SEbox,inner sep=2pt,minimum height=6mm,fill=white,minimum width=12mm]
\tikzstyle{large dmap dag}=[draw,doubled,shape=SEbox,inner sep=2pt,minimum height=6mm,fill=white,minimum width=12mm]
\tikzstyle{very large dmap}=[draw,doubled,shape=NEbox,inner sep=2pt,minimum height=6mm,fill=white,minimum width=19.5mm]
\tikzstyle{muxbox}=[draw,shape=rectangle,minimum height=3mm,minimum width=3mm,fill=white]
\tikzstyle{dmuxbox}=[muxbox,doubled]
\tikzstyle{box}=[draw,shape=rectangle,inner sep=2pt,minimum height=6mm,minimum width=6mm,fill=white]
\tikzstyle{dbox}=[draw,doubled,shape=rectangle,inner sep=2pt,minimum height=6mm,minimum width=6mm,fill=white]
\tikzstyle{dmap}=[draw,doubled,shape=NEbox,inner sep=2pt,minimum height=6mm,fill=white]
\tikzstyle{dmapdag}=[draw,doubled,shape=SEbox,inner sep=2pt,minimum height=6mm,fill=white]
\tikzstyle{dmapadj}=[draw,doubled,shape=SEbox,inner sep=2pt,minimum height=6mm,fill=white]
\tikzstyle{dmaptrans}=[draw,doubled,shape=SWbox,inner sep=2pt,minimum height=6mm,fill=white]
\tikzstyle{dmapconj}=[draw,doubled,shape=NWbox,inner sep=2pt,minimum height=6mm,fill=white]
\tikzstyle{ddmap}=[draw,doubled,dashed,shape=NEbox,inner sep=2pt,minimum height=6mm,fill=white]
\tikzstyle{ddmapdag}=[draw,doubled,dashed,shape=SEbox,inner sep=2pt,minimum height=6mm,fill=white]
\tikzstyle{ddmapadj}=[draw,doubled,dashed,shape=SEbox,inner sep=2pt,minimum height=6mm,fill=white]
\tikzstyle{ddmaptrans}=[draw,doubled,dashed,shape=SWbox,inner sep=2pt,minimum height=6mm,fill=white]
\tikzstyle{ddmapconj}=[draw,doubled,dashed,shape=NWbox,inner sep=2pt,minimum height=6mm,fill=white]
\tikzstyle{smap}=[draw,shape=sNEbox,fill=white]
\tikzstyle{smapdag}=[draw,shape=sSEbox,fill=white]
\tikzstyle{smapadj}=[draw,shape=sSEbox,fill=white]
\tikzstyle{smaptrans}=[draw,shape=sSWbox,fill=white]
\tikzstyle{smapconj}=[draw,shape=sNWbox,fill=white]
\tikzstyle{dsmap}=[draw,dashed,shape=sNEbox,fill=white]
\tikzstyle{dsmapdag}=[draw,dashed,shape=sSEbox,fill=white]
\tikzstyle{dsmaptrans}=[draw,dashed,shape=sSWbox,fill=white]
\tikzstyle{dsmapconj}=[draw,dashed,shape=sNWbox,fill=white]
\tikzstyle{mmap}=[draw,shape=mNEbox]
\tikzstyle{mmapdag}=[draw,shape=mSEbox]
\tikzstyle{mmaptrans}=[draw,shape=mSWbox]
\tikzstyle{mmapconj}=[draw,shape=mNWbox]
\tikzstyle{mmapgray}=[draw,fill=gray!40!white,shape=mNEbox]
\tikzstyle{smapgray}=[draw,fill=gray!40!white,shape=sNEbox]
\pgfmathsetmacro{\pgf@shorten@left}{\pgfkeysvalueof{/tikz/shorten left}}
\pgfmathsetmacro{\pgf@shorten@right}{\pgfkeysvalueof{/tikz/shorten right}}
\pgfmathsetmacro{\pgf@shorten@left}{\pgfkeysvalueof{/tikz/shorten left}}
\pgfmathsetmacro{\pgf@shorten@right}{\pgfkeysvalueof{/tikz/shorten right}}
\pgfmathsetmacro{\pgf@shorten@left}{\pgfkeysvalueof{/tikz/shorten left}}
\pgfmathsetmacro{\pgf@shorten@right}{\pgfkeysvalueof{/tikz/shorten right}}
\pgfmathsetmacro{\pgf@shorten@left}{\pgfkeysvalueof{/tikz/shorten left}}
\pgfmathsetmacro{\pgf@shorten@right}{\pgfkeysvalueof{/tikz/shorten right}}
\pgfmathsetmacro{\pgf@shorten@left}{\pgfkeysvalueof{/tikz/shorten left}}
\pgfmathsetmacro{\pgf@shorten@right}{\pgfkeysvalueof{/tikz/shorten right}}
\pgfmathsetmacro{\pgf@shorten@left}{\pgfkeysvalueof{/tikz/shorten left}}
\pgfmathsetmacro{\pgf@shorten@right}{\pgfkeysvalueof{/tikz/shorten right}}
\pgfmathsetmacro{\pgf@shorten@left}{\pgfkeysvalueof{/tikz/shorten left}}
\pgfmathsetmacro{\pgf@shorten@right}{\pgfkeysvalueof{/tikz/shorten right}}
\tikzstyle{kpoint common}=[draw,fill=white,inner sep=1pt,minimum height=4mm]
\tikzstyle{langstate}=[shape=langcopoint,shorten left=5pt,kpoint common,font=\footnotesize]
\tikzstyle{langstatehigh}=[shape=langcopointhigh,shorten left=5pt,kpoint common,font=\footnotesize]
\tikzstyle{langeffect}=[shape=langpoint,shorten left=5pt,kpoint common,font=\footnotesize]
\tikzstyle{langbox}=[shape=langrect,shorten left=5pt,kpoint common,font=\footnotesize] 
\tikzstyle{langboxhigh}=[shape=langrecthigh,shorten left=5pt,kpoint common,font=\footnotesize] 
\tikzstyle{kpoint}=[shape=cornerpoint,shorten left=5pt,kpoint common]
\tikzstyle{kpoint adjoint}=[shape=cornercopoint,shorten left=5pt,kpoint common]
\tikzstyle{kpoint conjugate}=[shape=cornerpoint,shorten right=5pt,kpoint common]
\tikzstyle{kpoint transpose}=[shape=cornercopoint,shorten right=5pt,kpoint common]
\tikzstyle{kpoint symm}=[shape=cornerpoint,shorten left=5pt,shorten right=5pt,kpoint common]
\tikzstyle{black kpoint}=[shape=cornerpoint,shorten left=5pt,kpoint common,fill=black,font=\color{white}]
\tikzstyle{black kpoint adjoint}=[shape=cornercopoint,shorten left=5pt,kpoint common,fill=black,font=\color{white}]
\tikzstyle{black kpointadj}=[shape=cornercopoint,shorten left=5pt,kpoint common,fill=black,font=\color{white}]
\tikzstyle{black dkpoint}=[shape=cornerpoint,shorten left=5pt,kpoint common,fill=black, doubled,font=\color{white}]
\tikzstyle{black dkpoint adjoint}=[shape=cornercopoint,shorten left=5pt,kpoint common,fill=black, doubled,font=\color{white}]
\tikzstyle{black dkpointadj}=[shape=cornercopoint,shorten left=5pt,kpoint common,fill=black, doubled,font=\color{white}]
\tikzstyle{kpointdag}=[kpoint adjoint]
\tikzstyle{kpointadj}=[kpoint adjoint]
\tikzstyle{kpointconj}=[kpoint conjugate]
\tikzstyle{kpointtrans}=[kpoint transpose]
\tikzstyle{big kpoint}=[kpoint, minimum width=1.2 cm, minimum height=8mm, inner sep=4pt, text depth=3mm]
\tikzstyle{wide kpoint}=[kpoint, minimum width=1 cm, inner sep=2pt]%, text depth=-0.7 mm]
\tikzstyle{wide kpointdag}=[kpointdag, minimum width=1 cm, inner sep=2pt]%, text depth=0.7 mm]
\tikzstyle{wide kpointconj}=[kpointconj, minimum width=1 cm, inner sep=2pt]%, text depth=-0.7 mm]
\tikzstyle{wide kpointtrans}=[kpointtrans, minimum width=1 cm, inner sep=2pt]%, text depth=0.7 mm]
\tikzstyle{gray kpoint}=[kpoint,fill=gray!50!white]
\tikzstyle{gray kpointdag}=[kpointdag,fill=gray!50!white]
\tikzstyle{gray kpointadj}=[kpointadj,fill=gray!50!white]
\tikzstyle{gray kpointconj}=[kpointconj,fill=gray!50!white]
\tikzstyle{gray kpointtrans}=[kpointtrans,fill=gray!50!white]
\tikzstyle{gray dkpoint}=[kpoint,fill=gray!50!white,doubled]
\tikzstyle{gray dkpointdag}=[kpointdag,fill=gray!50!white,doubled]
\tikzstyle{gray dkpointadj}=[kpointadj,fill=gray!50!white,doubled]
\tikzstyle{gray dkpointconj}=[kpointconj,fill=gray!50!white,doubled]
\tikzstyle{gray dkpointtrans}=[kpointtrans,fill=gray!50!white,doubled]
\tikzstyle{white label}=[draw,fill=white,rectangle,inner sep=0.7 mm]
\tikzstyle{gray label}=[draw,fill=gray!50!white,rectangle,inner sep=0.7 mm]
\tikzstyle{black label}=[draw,fill=black,rectangle,inner sep=0.7 mm]
\tikzstyle{dkpoint}=[kpoint,doubled]
\tikzstyle{wide dkpoint}=[wide kpoint,doubled]
\tikzstyle{dkpointdag}=[kpoint adjoint,doubled]
\tikzstyle{wide dkpointdag}=[wide kpointdag,doubled]
\tikzstyle{dkcopoint}=[kpoint adjoint,doubled]
\tikzstyle{dkpointadj}=[kpoint adjoint,doubled]
\tikzstyle{dkpointconj}=[kpoint conjugate,doubled]
\tikzstyle{dkpointtrans}=[kpoint transpose,doubled]
\tikzstyle{kscalar}=[kpoint common, shape=EBox, inner xsep=-1pt, inner ysep=3pt,font=\small]
\tikzstyle{kscalarconj}=[kpoint common, shape=WBox, inner xsep=-1pt, inner ysep=3pt,font=\small]
 \tikzstyle{upground}=[circuit ee IEC,ground,rotate=90,scale=2.5]
 \tikzstyle{downground}=[circuit ee IEC,ground,rotate=-90,scale=2.5]
 \tikzstyle{bigground}=[regular polygon,regular polygon sides=3,draw=gray,scale=0.50,inner sep=-0.5pt,minimum width=10mm,fill=gray]
\tikzstyle{arrs}=[-latex,font=\small,auto]
\tikzstyle{arrow plain}=[arrs]
\tikzstyle{arrow dashed}=[dashed,arrs]
\tikzstyle{arrow bold}=[very thick,arrs]
\tikzstyle{arrow hide}=[draw=white!0,-]
\tikzstyle{arrow reverse}=[latex-]
\tikzstyle{cdnode}=[]
\newcommand{\smalldotonly}[1]{%
\,\begin{tikzpicture}[dotpic,yshift=-0.15mm]
\node [#1] (a) at (0,0) {};
\end{tikzpicture}\,}
\newcommand{\smallblackdot}{\smalldotonly{smalldot}\xspace}%NEW
\newcommand{\smallwhitedot}{\smalldotonly{small white dot}\xspace}
\newcommand{\smallgraydot}{\smalldotonly{small gray dot}\xspace}
\definecolor{hexcolor0xa9a9a9}{rgb}{0.663,0.663,0.663}  
\tikzstyle{GrayLine}=[dashed,draw=hexcolor0xa9a9a9]
\tikzstyle{gray}=[dashed,draw=hexcolor0xa9a9a9]
\theoremstyle{definition}
\newtheorem{theorem}{Theorem}[section]
\newtheorem*{theorem*}{Theorem}
\newtheorem{lemma}[theorem]{Lemma} 
\newtheorem{prop}[theorem]{Proposition}
\newtheorem{example*}[theorem]{Example*}
\newtheorem{examples*}[theorem]{Examples*}
\newtheorem{remark}[theorem]{Remark}
\newtheorem{remark*}[theorem]{Remark*}
\def\bR{\begin{color}{red}}
\def\bB{\begin{color}{blue}}
\def\bM{\begin{color}{magenta}}
\def\bC{\begin{color}{cyan}}
\def\bW{\begin{color}{white}}
\def\bMl{\begin{color}{black}}
\def\bG{\begin{color}{green}}
\def\bY{\begin{color}{yellow}}
\def\e{\end{color}\xspace}
\newcommand{\bit}{\begin{itemize}}
\newcommand{\eit}{\end{itemize}\par\noindent}
\newcommand{\ben}{\begin{enumerate}}
\newcommand{\een}{\end{enumerate}\par\noindent}
\newcommand{\beq}{\begin{equation}}
\newcommand{\eeq}{\end{equation}\par\noindent}
\newcommand{\beqa}{\begin{eqnarray*}}
\newcommand{\eeqa}{\end{eqnarray*}\par\noindent}
\newcommand{\beqn}{\begin{eqnarray}}
\newcommand{\eeqn}{\end{eqnarray}\par\noindent}
\title{Meaning updating of density matrices} %\\ in DisCoCirc}  
\titlerunning{Meaning updating in DisCoCirc}  
\begin{document}
\maketitle %\TODOm {Title: I changed the word 'dual' into 'twice'. Could also consider 'iterated' or 'double'}
 
%\tableofcontents

\begin{abstract}
The DisCoCat model of natural language meaning assigns meaning to a sentence given: (i) the meanings of its words, and, (ii) its grammatical structure. The recently introduced DisCoCirc model extends this to text consisting of multiple sentences. While in DisCoCat all meanings are fixed, in DisCoCirc each sentence updates meanings of words. In this paper we explore different update mechanisms for DisCoCirc, in the case where meaning is encoded in density matrices--- which come with several advantages as compared to vectors.

Our starting point are two non-commutative update mechanisms, borrowing one from quantum foundations research \cite{Leifer1, leifer2013towards}, and the other one from \cite{CoeckeText, MarthaDot}. Unfortunately, neither of these satisfies any desirable algebraic properties, nor are internal to the meaning category. By passing to double density matrices \cite{Ashoush, Zwart2017} we do get an elegant internal diagrammatic update mechanism. 
 
We also show that (commutative) spiders can be cast as an instance of the update mechanism of \cite{Leifer1, leifer2013towards}. This result is of interest to quantum foundations, as it bridges the work in Categorical Quantum Mechanics (CQM) with that on conditional quantum states. Our work also underpins implementation of text-level Natural Language Processing (NLP) on quantum hardware, for which  exponential space-gain and quadratic speed-up have previously been identified.
%\TODOb{Explain what the results here add: reduction of parameter space?  How to implement the caps? (rather implement as CP-maps?)}  
\end{abstract}  

\section{Intro}     

While grammar is a mathematically well studied structure \cite{Ajdukiewicz, Lambek0, Grishin, LambekBook}, in Natural Language Processing (NLP) this mathematical structure is still largely ignored. The Categorical Distributional Semantics (DisCoCat) framework \cite{CSC} was introduced in order to address this problem: it exploits grammatical structure in order to derive meanings of sentences from the meanings of its constituent words. For doing so we mostly relied on Lambek's pregroups \cite{LambekBook}, because of their simplicity, but any other mathematical model of grammar would work as well \cite{LambekvsLambek}.

In NLP, meanings are established empirically (e.g.~\cite{harris1954distributional}), and this leads to a vector space representation. DisCoCat allows for meanings to be described in a variety of models, including the vector spaces widely used in NLP \cite{CSC, GrefSadr, KartSadr}, but also relations as widely used in logic \cite{CSC}, density matrices \cite{calco2015, EsmaSC, bankova2016graded}, conceptual spaces \cite{ConcSpacI}, as well as many other more exotic models \cite{marsden2017custom, DBLP:conf/wollic/CoeckeGLM17}. 

Density matrices, which will be of interest to us in this paper, are to be conceived as an extension of the vector space model. Firstly, vector spaces do not allow for encoding lexical entailment structure such as in: 
\begin{center}
{\tt tiger $\leq$ big cat $\leq$ mammal $\leq$ vertebrate $\leq$ animal} 
\end{center}
while density matrices \cite{vN} do allow for this \cite{EsmaSC, bankova2016graded}. Density matrices have also been used in DisCoCat to encode ambiguity (a.k.a.~`lack of information') \cite{RobinMSc, DimitriDPhil, calco2015}. Here the use of density matrices perfectly matches von Neumann's motivation to introduce them for quantum theory in the first place, and why they currently also underpin quantum information theory \cite{BennettShor}. 
%Indeed, just as they are the natural progression from vectors in quantum theory, so are they for  vectors representing meaning: 
Density matrices also inherit the empirical benefits of vectors for NLP purposes. Other earlier uses of density matrices in NLP exploit the extended parameter space \cite{blacoe2013quantum}, which is a benefit we can also exploit.

%\bB DisCoCat does have some restrictions, most notably, it does not provide a mechanism for compositing sentences. A related shortcoming is that meanings in DisCoCat are fixed, while in texts, meanings of words are subject to \em update\em. For example, this happens as the reader acquires more knowledge when they read the text:

DisCoCat does have some restrictions, however. It does not provide an obvious or unique mechanism for compositing sentences. Meanings in DisCoCat are also static, while on the other hand, in running text, meanings of words are subject to \em update\em, namely, the knowledge-update processes that the reader undergoes as they acquire more knowledge upon reading a text:
\begin{center}
{\tt Once there was Bob.\\ Bob was a dog.\\ He was a bad dog that bites.}
\end{center}
or, when properties of actors change as a story unfolds:
\begin{center}
{\tt Alice and Bob were born.\\ They got married.\\ Then they broke up.}
\end{center}
These restrictions of DisCoCat were addressed in the recently introduced DisCoCirc framework \cite{CoeckeText}, in which sentences within larger text can be composed, and meanings are updated as text progresses. This raises the new question on what these update mechanisms are for specific models. Due to the above stated motivations, we focus on meaning embeddings in density matrices.

There has been some use of meaning updating within DisCoCat, most notably, for encoding intersective adjectives \cite{ConcSpacI} and relative pronouns \cite{FrobMeanI, FrobMeanII}. Here, a property is attributed to some noun by means of a suitable connective. Thus far, DisCoCat relied on the commutative special Frobenius algebras of CQM \cite{CPV, CPaqPav}, a.k.a.~\em spiders \em \cite{CQMII, CKbook}. 
%\bR In the case of the category of sets and relations these realise  intersection of subsets \cite{CoeckeText}. \e \TODOb{Put elsewhere!  Too obscure here.}  
However, for the purpose of general meaning updating spiders are far  too restrictive, for example, they force updating to be commutative. For this reason in this paper we study several other update mechanisms, and provide a unified picture of these, which also encompasses spiders.

In Section \ref{sec:updatinggeneral}, we place meaning updating at the very centre of DisCoCirc: we show that DisCoCirc can be conceived as a theory about meaning updating only. This will in particular involve a representation of transitive verbs that emphasises how a verb creates a bond between the subject and the object. Such a representation has previously been used in \cite{GrefSadr, KartsaklisSadrzadeh2014}, where also experimental support was provided.

In Section \ref{sec:pedals} we identify two existing non-commutative update mechanisms for density matrices. The first one was introduced in \cite{CoeckeText, MarthaDot}, which we will refer to as \em fuzz\em, and has a very clear conceptual grounding. The other one was introduced within the context of a quantum theory of Bayesian inference \cite{Leifer1, leifer2013towards}, which we will refer to as \em phaser\em. While this update mechanism has been used in quantum foundations, and has been proposed as a connective within DisCoCat \cite{RobinMSc}, its conceptual status is much less clear, not in the least since it involves the somewhat ad hoc looking expression $\sqrt{\sigma}\, \rho\, \sqrt{\sigma}$ involving density matrices $\rho$ and $\sigma$. 
In Section \ref{sec:phaser} we show that, in fact, the phaser can be traced back to spiders, but in a manner that makes this update mechanism non-commutative. In Section \ref{sec:exp} we point at already existing experimental evidence in favour of our update mechanisms.

In Section \ref{sec:lotsofnon} we list a number of shortcomings of fuzz and phaser. Firstly, as we
have two very distinct mechanisms, the manner in which meanings get updated is not unique. Both are moreover algebraically poor (e.g.~they are non-associative). Finally, neither is internal to the meaning category of density matrices and CP-maps.

As both update mechanisms do have a natural place within a theory of meaning updating, in Section \ref{sec:unification} we propose a mechanism that has fuzz and phaser as special cases. 
%, and hence also update by means of spiders.
We achieve this by meanings and verbs as double density matrices \cite{Ashoush, Zwart2017}, which have a richer structure than density matrices. Doing so we still remain internal to the meaning category of density matrices and CP-maps, as we demonstrate in Section \ref{sec:implementation}, where we also discuss implementation on quantum hardware.

In Section \ref{sec:examples} we provide some very simple illustrative  examples.
% of meaning updates in natural language \bR involving both fuzz and phaser and also their unifying generalisation\e.
 
\section{Preliminaries} 

We expect the reader to have some familiarity with the DisCoCat framework \cite{CSC, FrobMeanI, CLM}, and with its diagrammatic formalism that we borrowed from Categorical Quantum Mechanics (CQM) \cite{CKpaperI, CQMII, CKbook}, most notably caps/cups, spiders, and doubling. We also expect the reader to be familiar with Dirac notation, projectors, density matrices, spectral decomposition and completely positive maps as used in quantum theory. We now set out the specific notational conventions that we will be following in this paper.

We read diagrams from top to bottom. Sometimes the boxes will represent linear maps, and sometimes they will represent completely positive maps. In order to distinguish these two representations we follow the conventions of \cite{CKbook, CQMII}, which means that a vector and a density matrix will respectively be represented as:
\ctikzfig{stateSingleDouble}
where wires represent systems. A privileged vector for two systems is the \em cap\em:
\[
\tikzfig{cap}\ \ :=\ \ \sum_i |i i\rangle 
\]
and its adjoint (a.k.a.~`bra') is the \em cup\em: 
\[
\tikzfig{cup}\ \ := \ \ \sum_i \langle i i|
\]

Similarly to states, linear maps and CP maps are respectively depicted as: 
\ctikzfig{boxSingleDouble}
We will reserve white dots to represent \em spiders\em:
\beq\label{eq:white spider}
\tikzfig{spidercomp}\ \ := \ \ \sum_i |i\ldots i\rangle\langle i\ldots i| 
\eeq
Crucially, spiders are clearly basis-dependent, and in fact, they represent orthonormal bases \cite{CPV}. Note also that caps and cups are instances of spiders, and more generally, that spiders can be conceived as `multi-wires' \cite{CKbook, CQMII}: the only thing that matters is what is connected to what by means of possibly multiple spiders, and not what the precise shape is of the spider-web that does so. This behaviour can be succinctly captured within the following \em fusion \em equation:
\beq\label{eq:fusion}
\tikzfig{spider}\ \ =\ \ \tikzfig{spidercomp}
\eeq
 
By \em un-doubling \em we refer to re-writing a CP-map as follows \cite{SelingerCPM, CKbook}:
\[
\tikzfig{CPonly}\ \ \leadsto\ \ \tikzfig{CPM}
\]
with the two boxes being $\sum_i f_i\otimes | i \rangle$ and $\sum_i | i \rangle\otimes\bar{f}_i$ for Kraus maps $f_i$, that is:
\beq\label{eq:CPM4}
\tikzfig{CPM4}
\eeq
In the specific case of density matrices this becomes: 
\beq\label{eq:CPonly2}
\tikzfig{CPonly2}\ \ \leadsto\ \ \tikzfig{CPM2}
\eeq
Concretely, for a density matrix $\sum_i p_i |i\rangle\langle i|$ we have $\omega = \sum_i\sqrt{p_i} |i i \rangle$, i.e.:
\[
%\tikzfig{CPonly3}\ \ \leadsto\ \ 
\tikzfig{CPM3}
\] 
 
\section{Text meaning in DisCoCirc as updating}\label{sec:updatinggeneral}

The starting point of both DisCoCat and DisCoCirc is the fact that pregroup analysis \cite{LambekBook} of the grammatical structure associates to each sentences a diagram. In the case of a sentence (of which the associated grammatical type is denoted $s$) consisting of a subject (with type $n$ for `noun'), a transitive verb (with composite type ${}^{-1}n \cdot s \cdot n^{-1}$), and an object (also with type $n$) this diagram looks as follows:
\ctikzfig{hates}
In DisCoCat \cite{CSC} we then replace the types by the encoding of the word meanings,
\ctikzfig{s4copy}
which in our case are represented by density matrices. The wires are interpreted as maps, for example, the cups will be the CP-maps associated to Bell-effects:
\[
\sum_i \langle ii |
\]
while the straight wire is an identity.

The above assumes that all words have fixed meanings given by those density matrices.
However, as already explained in the introduction, meanings evolve in the course of developing text. Therefore, in DisCoCirc \cite{CoeckeText}, prime actors like {\tt Bob} and {\tt Alice} in {\tt Bob bites Alice} are not represented by a state, but instead by a wire carrying the evolving meaning.
%so that the reader can infer their meaning while progressing through the text. 
For this purpose we take the $s$-type to be the same as the type of that of the incoming actors \cite{CoeckeText}: 
\ctikzfig{s5copy}
If we happen to have prior knowledge about that actor we can always plug a corresponding `prior' state at the input of the wires: 
\ctikzfig{s5copytris}
which yields a DisCoCat-style picture. One can think of a sentence with an open input as a function $f$, while providing a prior state corresponds to $f$ being applied to a concrete input as in $f(x)$.
 The major advantage of not fixing states as a default is that this now allows us to compose sentences, for example: 
\ctikzfig{s5copybis}

The above in particular means that the $s$-type will depend on the sentence. For example, in {\tt Bob is (a) dog}, the noun {\tt dog} can be taken to be fixed, so that the $s$-type becomes the {\tt Bob}-wire alone, and we also introduce a special notation for {\tt is}: 
\beq\label{eq:preupdate}
\tikzfig{s17grey} 
\eeq
Yanking wires this becomes: 
\beq\label{eq:update}
\tikzfig{s11bis}
\eeq
Here we think of {\tt dog} as an adjective, for {\tt Bob}. This reflects what we mean by an \em update mechanism \em in this paper: we update an actor's meaning (here {\tt Bob}) by imposing a feature (here {\tt Dog}) by means of the grey dot connective, where by `actor' we refer to varying nouns subject to update. 

We can also put more general transitive verbs into an adjective-like shape like in (\ref{eq:update}) by using the verb-form introduced in \cite{GrefSadr, KartsaklisSadrzadeh2014}:
\[
\tikzfig{s34bit}\ \ =\ \ \tikzfig{s34bitbis}
\]
This representation of transitive verbs emphasises how a verb creates a bond between the subject and the object.
From this point of view,
an entire text %(comprising only factual statements) 
can in principle be reduced to updates of this form,
with the grey dot playing a central role. % for word and sentence interactions. 
%the grey dot is at the centre of all word/sentence interactions.

The main remaining question now is: 
\begin{center}
\em What is the grey dot? \em
\end{center}
A first obvious candidate are the spiders (\ref{eq:white spider}), which have been previously employed in DisCoCat
for intersective adjectives \cite{ConcSpacI} and relative pronouns \cite{FrobMeanI, FrobMeanII}.  However, while by spiders being multi-wires, by fusion (\ref{eq:fusion}) we have:
\[
\tikzfig{s34bittriscopy}\ \ =\ \ \tikzfig{s34bitquadcopy}
\]
clearly we don't have: 
\[
\tikzfig{s34bittris}\ \ \not=\ \ \tikzfig{s34bitquad}
\]
Therefore, in this case the grey dot needs to be something else. As spiders do make sense in certain cases, we desire something that has spiders as a special case, and as we shall see, this wish will be fulfilled.

\section{Updating density matrices: fuzz vs.~phaser}\label{sec:pedals}

What is the most basic form of update for density matrices? Following Birkhoff-von Neumann quantum logic \cite{BvN}, any proposition about a physical system corresponds to a subspace $A$, or its corresponding projector $P_A$. Following \cite{widdows2003word, Widdows} it is also natural to think of propositions for natural language meaning like {\tt (being a) dog} as such a projector. Imposing a proposition on a density matrix is then realised as follows:
\beq\label{eq:updateP} 
P\circ - \circ P
\eeq
for example, $P_{\tt dog}\circ \rho_{\tt Bob} \circ P_{\tt dog}$. Typically the resulting density matrix won't be normalised, so we will use the term density matrix also for sub-normalised and super-normalised positive matrices.

Now, representing meanings by density matrices also {\tt dog} itself would also correspond to a density matrix in (\ref{eq:update}) for an appropriate choice of the grey dot. Fortunately, 
each projector $P$ is a (super-normalised) density matrix.

More generally, by means of weighted sums of projectors we obtain general density matrices in the form of their spectral decomposition:
\beq\label{eq:speccom}
\sum_i x_i P_i
\eeq
where one could imagine these sums to arise from the specific empirical procedure (e.g.~\cite{calco2015, MarthaNeg}) that is used to establish the meaning of {\tt dog}. With {\tt dog} itself a density matrix, we can now think of the grey dot in (\ref{eq:update}) as combining two density matrices:
\beq\label{s11bisbis1}
\tikzfig{s11bisbis1}\ \ := \ \ \tikzfig{s11bisbis2}
\eeq
We now consider some candidates for such a grey dot. Firstly, let's eliminate two candidates. 
Composing two density matrices by matrix multiplication doesn't in general return a density matrix, nor does this have an operational interpretation. Alternatively, component-wise multiplication corresponds to fusion via spiders,
which as discussed above is too specialised as it is commutative.

Two alternatives for these have already appeared in the literature:
\begin{align}\label{pedal1} 
\rho \,\raisebox{-1.5mm}{\epsfig{figure=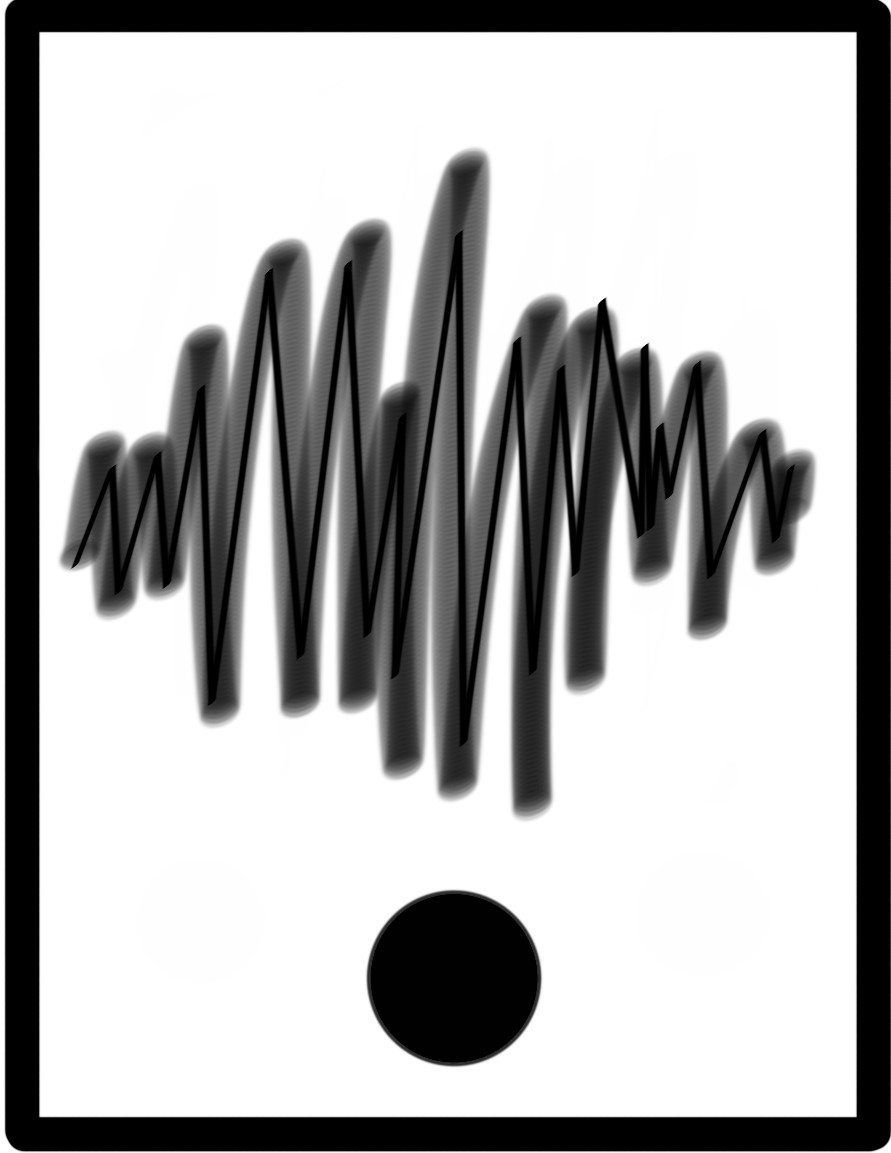,width=11pt}}\, \sigma %\rho_{\mbox{\scriptsize fuzz}}  
&:= \sum_i x_i \Bigl(P_i\circ \rho \circ P_i\Bigr)
& 
&\mbox{with} 
& %\!\!\!\!\!\!\!\!\!\!\!\!\!\!\!\!\!\!\!\!\!\!
\sigma %\rho_{\mbox{\scriptsize fuzz}}
&:=\sum_i x_i P_i 
\\ \label{pedal2}
\rho \,\raisebox{-1.5mm}{\epsfig{figure=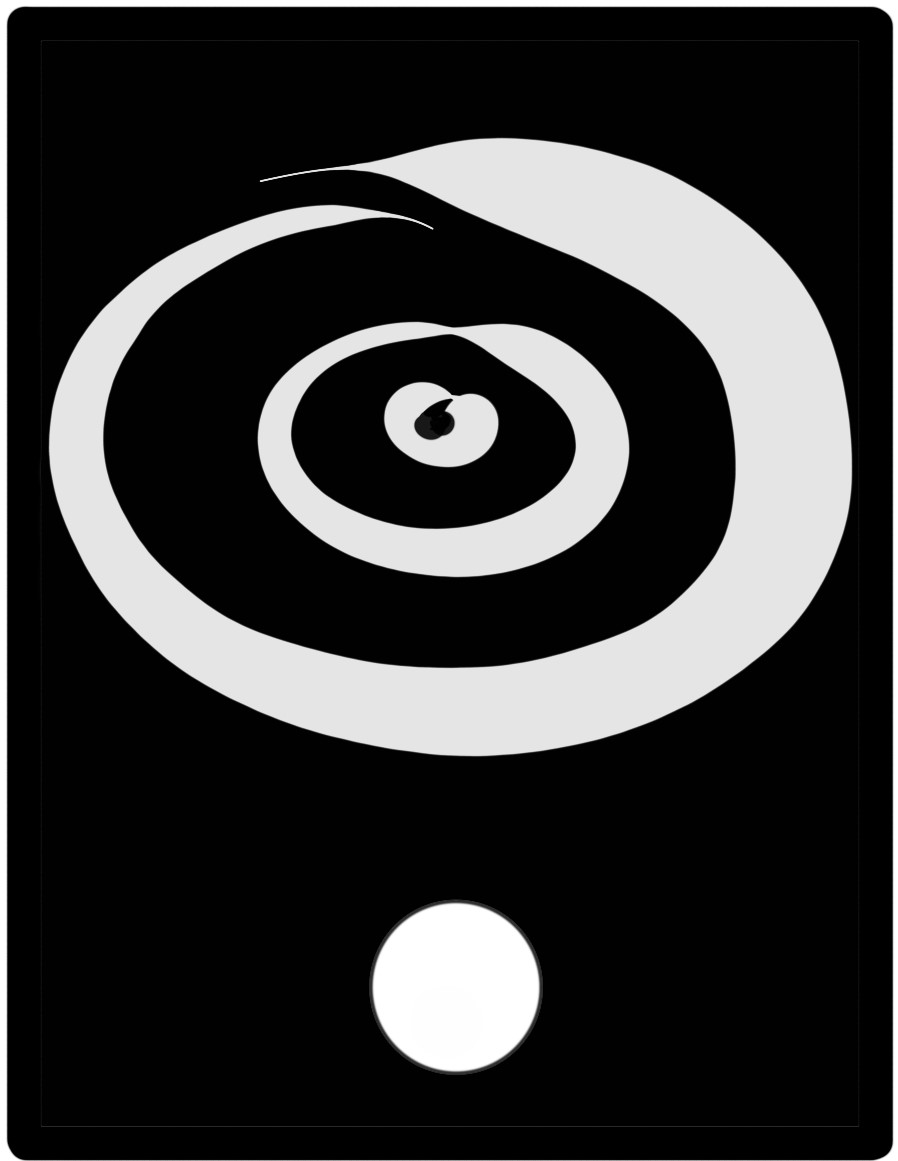,width=11pt}}\, \sigma %\rho_{\mbox{\scriptsize phase}} 
&:= \Biggl(\sum_i x_i P_i\Biggr)\circ \rho \circ \Biggl(\sum_j x_j P_j\Biggr)
&
&\mbox{with}
& %\!\!\!\!\!\!\!\!\!\!\!\!\!\!\!\!\!\!\!\!\!\!
\sigma %\rho_{\mbox{\scriptsize phase}}
&:=\sum_i x_i^2 P_i
\end{align}
where for each we introduced a new dedicated `guitar-pedal' notation. 
%Here we made use of the spectral decompositions of the density matrix $\sigma$ in two different manners. 
The first of these was proposed in \cite{CoeckeText, MarthaDot} specifically for NLP. The second one was proposed in the form $\sqrt{\sigma}\, \rho\, \sqrt{\sigma}$ within the context of a quantum theory of Bayesian inference \cite{Leifer1, leifer2013towards, CoeckeSpekkens2012}. Clearly, as update mechanisms, each of these can be seen as a quantitative generalisation of (\ref{eq:updateP}):
\beq\label{eq:star1}
\sum_i x_i \Bigl(P_i\circ - \circ P_i\Bigr)\vspace{-2mm} 
\eeq
\beq \label{eq:star2}
\left(\sum_i x_i P_i\right)\circ - \circ \left(\sum_j x_j P_j\right) 
\eeq

Diagrammatically, using un-doubling, we can represent the spectral decomposition of the density matrix $\sigma$ as follows:
\[
%\tikzfig{mech1}\ \ = 
\tikzfig{sigma}\ \ \leadsto\ \ \ \tikzfig{mech2} 
\]
and the fuzz and phaser seen as update mechanisms as in (\ref{eq:update}) then become:
\[
\tikzfig{mech3}\ \ = \ \ \tikzfig{mech4} 
\qquad\qquad\quad
\tikzfig{mech5}\ \ = \ \ \tikzfig{mech6} 
\]
where the state labeled $x$ is the vector $|x\rangle=\left( x_1\dots x_{n-1} \right)^T$, the box labeled $P$ is the linear map $\sum_i P_i \otimes \langle i|$, and the white dot is a spider (\ref{eq:white spider}).

\subsection{The fuzz}  

We call \raisebox{-1.5mm}{\epsfig{figure=FUZZZZZ.jpg,width=11pt}} the \em fuzz\em.
The coefficients $x_i$ in the spectral decomposition of $\sigma$ are interpreted to representing the lack of knowledge about which proposition $P_i$ is imposed on $\rho$. In other words, the fuzz  imposes a fuzzy proposition
on the density matrix, 
%This fuzziness then naturally carries over to the resulting density matrix,
and returns a density matrix comprising the mixture of having imposed different propositions each yielding a term $P_i\circ\rho \circ P_i$. This reflects the manner in which we would update a quantum state if there is uncertainty on which projector is imposed on a system undergoing measurement.

\subsection{The phaser}\label{sec:phaser} 

We call \raisebox{-1.5mm}{\epsfig{figure=PHASERRR.jpg,width=11pt}} the \em phaser\em. To understand the effect of the phaser, %it helps to begin by looking at the specific case where the 1st argument $\rho=|\psi\rangle\langle\psi|$ of the phaser is pure and 
we write the 2nd argument $\sigma$ in (\ref{pedal2})
in terms of rank-1 projectors $P_i :=|i\rangle\langle i|$ for an ONB $\left\{|i\rangle\right\}_i$, which can always be done by allowing some of the $x_i$'s to be the same. We have the following initial result relating the phaser to spiders: 

\begin{lemma} 
The phaser, when the 1st argument is pure, takes the form of a spider where the ONB in which the spider is expressed arises from diagonalisation of the 2nd argument. Setting $|x\rangle=\left( x_1\dots x_{n-1} \right)^T$, we have: %see that the above can be written as
\[
\Bigl(|\psi\rangle\langle\psi|\Bigr)
\,\raisebox{-1.5mm}{\epsfig{figure=PHASERRR.jpg,width=11pt}} 
\left( \sum_i x_i^2 |i\rangle\langle i|\right) 
= |\phi\rangle \langle\phi| 
\]
where:
\beq\label{eq:s50dd}
|\phi\rangle:=\ \   \tikzfig{s50dd}
\eeq
So in particular, the resulting density matrix is also pure.
\end{lemma}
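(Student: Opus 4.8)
The plan is to settle the statement by a short algebraic computation and then to match the result with the diagram (\ref{eq:s50dd}).

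First I would put the phaser into closed form. Writing $P_i=|i\rangle\langle i|$ and taking the $x_i$ non-negative, the operator $\sum_i x_i P_i$ is exactly the positive square root $\sqrt{\sigma}$ of $\sigma=\sum_i x_i^2|i\rangle\langle i|$. Thus the defining equation (\ref{pedal2}) reads $\rho\mapsto\sqrt{\sigma}\,\rho\,\sqrt{\sigma}$, which is the form $\sqrt{\sigma}\,\rho\,\sqrt{\sigma}$ already referred to in the text. Substituting the pure state $\rho=|\psi\rangle\langle\psi|$ and using that $\sqrt{\sigma}$ is self-adjoint,
\[
\sqrt{\sigma}\,|\psi\rangle\langle\psi|\,\sqrt{\sigma}=\bigl(\sqrt{\sigma}|\psi\rangle\bigr)\bigl(\sqrt{\sigma}|\psi\rangle\bigr)^{\dagger}=|\phi\rangle\langle\phi|,\qquad |\phi\rangle:=\sqrt{\sigma}|\psi\rangle .
\]
This immediately yields the last sentence of the statement: the output is manifestly a pure state.

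It then remains to recognise $|\phi\rangle$ as the diagram (\ref{eq:s50dd}). Expanding in the ONB,
\[
|\phi\rangle=\sqrt{\sigma}\,|\psi\rangle=\sum_i x_i\langle i|\psi\rangle\,|i\rangle ,
\]
i.e.\ the component-wise product of $|\psi\rangle$ with the coefficient vector $|x\rangle$ (where $\langle i|x\rangle=x_i$) in the basis $\{|i\rangle\}_i$. By the spider definition (\ref{eq:white spider}), the two-input/one-output white dot sends $|\psi\rangle\otimes|x\rangle$ to $\sum_i|i\rangle\langle i|\psi\rangle\langle i|x\rangle=\sum_i x_i\langle i|\psi\rangle\,|i\rangle$, which is precisely $|\phi\rangle$. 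Diagrammatically this is transparent: specialising to rank-1 projectors turns the box $\sum_i P_i\otimes\langle i|=\sum_i|i\rangle\langle ii|$ into a spider, so in the phaser picture of Section~\ref{sec:pedals} the $P$-box collapses onto a white dot; un-doubling the pure $\rho$ and fusing spiders via (\ref{eq:fusion}) leaves a single spider joining $|\psi\rangle$ and the $x$-state on the ket side, with the conjugate half assembling into $\langle\phi|$.

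The only point requiring care — the main, if modest, obstacle — is the bookkeeping of the square root: one must observe that the coefficients $x_i$ in (\ref{pedal2}) are the square roots of the eigenvalues of $\sigma$, so that $\sum_i x_i P_i=\sqrt{\sigma}$ and not $\sigma$, and that rewriting the (possibly degenerate) spectral decomposition with rank-1 projectors $P_i=|i\rangle\langle i|$ is exactly what licenses reading the $P$-box as a spider. Everything else reduces to the elementary fact that sandwiching a rank-one projector between a self-adjoint operator and its adjoint again returns a rank-one projector.
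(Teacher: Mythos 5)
Your proof is correct and takes essentially the same route as the paper's: both expand the phaser's action in the eigenbasis of $\sigma$, observe that sandwiching the pure state $|\psi\rangle\langle\psi|$ between the self-adjoint operator $\sum_i x_i |i\rangle\langle i|$ and itself yields the outer product $|\phi\rangle\langle\phi|$ with $|\phi\rangle = \sum_i x_i\langle i|\psi\rangle\,|i\rangle$, and then identify this vector with the spider $\sum_i |i\rangle\langle ii|$ applied to $|\psi\rangle\otimes|x\rangle$. Your packaging of the left and right factors as $\sqrt{\sigma}$ is only a cosmetic reorganisation of the same computation (and requires the mild extra assumption $x_i\geq 0$ for the positive-square-root reading, where the paper only uses that the $x_i$ are real).
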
 
\begin{proof}
We have, using the fact the  $x_i$'s are real:
\beqa
 \Bigl(|\psi\rangle\langle\psi|\Bigr)  
\,\raisebox{-1.5mm}{\epsfig{figure=PHASERRR.jpg,width=11pt}}\,  
\left( \sum_i x_i^2 |i\rangle\langle i|\right) 
&=& 
\Biggl(\sum_i x_i |i\rangle\langle i|\Biggr)|\psi\rangle\langle\psi|\Biggl(\sum_j x_j|j\rangle\langle j|\Biggr)\\
&=& 
\Biggl(\sum_i\langle i|\psi\rangle x_i |i\rangle\Biggr)\Biggl(\sum_j\langle \psi |j\rangle x_j\langle j|\Biggr) \\
&=& 
\Biggl(\sum_i\langle i|\psi\rangle x_i |i\rangle\Biggr)\Biggl(\sum_j\overline{\langle j| \psi \rangle} x_j\langle j|\Biggr) \\
&=& 
\Biggl(\sum_i\psi_i x_i |i\rangle\Biggr)\Biggl(\sum_j\bar{\psi}_i \bar{x}_j\langle j|\Biggr) 
\eeqa
with $\psi_i:= \langle i|\psi\rangle$.  As the explicit form of the spider is: 
\[
\tikzfig{s50ddtris}\ \ =\ \sum_i |i\rangle\langle ii|  
\]
we indeed have:
\[
\sum_i\psi_i x_i |i\rangle \ =\ |\phi\rangle \ = \ \ \tikzfig{s50dd} 
\]  
what completes the proof.
\end{proof}

From this it now  follows that the apparently obscure phaser, in particular due to the involvement of square root when presented as in \cite{Leifer1, leifer2013towards, CoeckeSpekkens2012}, canonically generalises to the spiders previously used in DisCoCat: 

\begin{theorem}\label{thm:phaserspider}
The action of the phaser on its first argument can be expressed in terms of spiders, explicitly, using the notations of (\ref{pedal2}), it takes the form:
\beq\label{eq:spideractionthm}
-\,\raisebox{-1.5mm}{\epsfig{figure=PHASERRR.jpg,width=11pt}}\, \sigma\ \ = \ \ \tikzfig{s11biswhite} 
\eeq
\end{theorem}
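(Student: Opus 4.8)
The plan is to prove the asserted equality of maps directly in the doubled (CPM) picture, using only that the weighted operator occurring in the phaser is itself a decorated white spider, together with the fusion law (\ref{eq:fusion}). Write $M := \sum_i x_i P_i = \sum_i x_i |i\rangle\langle i|$, so that, by the definition (\ref{pedal2}), the action of the phaser on its first argument is the completely positive map $\rho \mapsto M\rho M$; note that $M$ is self-adjoint precisely because the $x_i$ are real, which is also what makes this a genuine CP-map. First I would un-double this map as in (\ref{eq:CPM4}): since it has the single Kraus operator $M$, its doubled diagram carries $M$ on the ket wire and $\bar M$ on the bra wire, and reality of the $x_i$ gives $\bar M = M$.

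Next I would observe that $M$ is nothing but a white spider decorated by the weight state $|x\rangle = (x_1 \dots x_{n-1})^T$. Concretely, the box $\sum_i P_i \otimes \langle i| = \sum_i |i\rangle\langle ii|$ introduced above is exactly the two-legged white spider of (\ref{eq:white spider}), and plugging $|x\rangle$ into its index leg returns $\sum_i x_i |i\rangle\langle i| = M$. Hence, in the doubled picture, both the ket-side and the bra-side copies of $M$ are white spiders, each decorated by a copy of $|x\rangle$ (the bra-side copy being $\overline{|x\rangle} = |x\rangle$, again by reality of the $x_i$).

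The entire doubled diagram now consists solely of white spiders strung along the doubled $\rho$-wire: the two copies of $M$ and the internal summation wire of the un-doubling (itself a spider). I would therefore collapse them by a single application of fusion (\ref{eq:fusion}), merging everything into one white spider whose only surviving decorations are the two weight states, which is precisely the right-hand side of (\ref{eq:spideractionthm}). As a consistency check against the preceding Lemma, feeding a pure input $|\psi\rangle\langle\psi|$ into this fused spider yields the doubled vector $|\phi\rangle\langle\phi|$ with $|\phi\rangle = \sum_i \psi_i x_i |i\rangle$, matching the Lemma exactly; since both sides of (\ref{eq:spideractionthm}) are linear in the first argument and the pure states span, agreement on pure states already forces the identity. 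This gives an alternative route for a reader who prefers to quote the Lemma rather than redo the fusion.

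The one genuinely delicate point is the doubling bookkeeping: one must verify that the conjugate (bra-side) spider carries $\bar M = M$ and the conjugate weight $\overline{|x\rangle} = |x\rangle$, so that the two spiders are really copies of the \emph{same} ONB structure and are therefore fusible via (\ref{eq:fusion}). This is exactly where the reality of the $x_i$ --- the same hypothesis used in the Lemma --- is indispensable; without it the ket and bra sides would carry different decorations and the clean single-spider form would fail. The remaining steps, namely identifying $\sum_i P_i \otimes \langle i|$ with the two-legged spider and reading off that the fused diagram is the right-hand side of (\ref{eq:spideractionthm}), are routine diagrammatic rewrites.
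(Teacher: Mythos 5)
Your steps (1)--(3) coincide with the core of the paper's own proof: the paper's entire argument is the computation that the white spider decorated by $|x\rangle$ equals $\sum_i x_i |i\rangle\langle i| = M$, after which the right-hand side of (\ref{eq:spideractionthm}) is read off as $M \circ - \circ M$, i.e.\ the phaser in the form (\ref{eq:star2}). Had you stopped after step (3) and simply observed that the pair of decorated spiders --- one on the ket wire and one on the bra wire of the un-doubled picture, equivalently the single \emph{doubled} spider with the doubled weight state plugged into its side leg --- already \emph{is} the right-hand side, your proof would be complete and essentially identical to the paper's.

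The fusion step, however, is a genuine error. The phaser un-doubles to the tensor product $M \otimes \overline{M}$: the two decorated spiders sit on two \emph{disconnected} wires, and fusion (\ref{eq:fusion}) only merges spiders joined by a wire, so nothing licenses collapsing them into one (for a single Kraus operator there is also no ``internal summation wire'' that could mediate such a connection). Worse, the object you describe --- one white spider carrying both weight decorations --- is not the phaser at all: in the un-doubled picture it equals $\sum_i x_i^2\, |ii\rangle\langle ii|$, i.e.\ the map $\rho \mapsto \sum_i x_i^2\, P_i \circ \rho \circ P_i$, which is the \emph{fuzz} (\ref{pedal1}) with weights $x_i^2$. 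Whether the ket and bra sides are connected is precisely the diagrammatic difference between fuzz and phaser, and it is what the preceding Lemma turns on: the phaser preserves purity exactly because the two sides stay disconnected, whereas your fused spider decoheres, sending $|\psi\rangle\langle\psi|$ to $\sum_i x_i^2 |\psi_i|^2\, |i\rangle\langle i|$ rather than to $|\phi\rangle\langle\phi|$. So the consistency check you propose would in fact \emph{fail} on the fused diagram, rather than confirm it. Your fallback route --- both sides are linear in the first argument, pure states span the space of matrices, and the Lemma gives agreement on pure states --- is a perfectly sound alternative proof, but only once it is applied to the correct, unfused right-hand side.
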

\begin{proof}  
We have:
\[
\tikzfig{s50ddbis}\ \ = \ \Biggl(\sum_j |j\rangle\langle jj| \Biggr)\circ\Biggl(1\otimes \sum_i x_i |i\rangle \Biggr)\ = \ \sum_i x_i |i\rangle\langle i|
\]
which then yields the action of the phaser in the form (\ref{eq:star2}).
\end{proof}

So in conclusion, the phaser boils down to the spiders that we are already familiarly with in DisCoCat, hence now solidly justifying its consideration by us in the first place.  Moreover,  there is one important qualification that will overcome our objection voiced above against using spiders given that they yield commutativity.  Namely,  these spiders may be expressed in different ONBs which they inherit from the 2nd argument $\sigma$,
and if we update with nouns which diagonalise in different bases, then the corresponding spiders typically won't commute: 
\[
\tikzfig{s34bittriscopycopy}\ \ \not=\ \ \tikzfig{s34bitquadcopycopy} 
\]
Hence, for the phaser, it is the properties with which the nouns are updated that control commutativity.
The special case in which they commute is then the counterpart to the intersective adjectives \cite{KampPartee1995} mentioned above. 

Finally, we justify the term `phaser'.
Recalling that the key feature of the fuzz and the phaser is that they produce a 
%\underline{normalised} 
density matrix, we see that we can let the $x_i$ in the phaser be complex: 
\beq\label{eq:star2'} 
\Biggl(\sum_i x_i |i\rangle\langle i|\Biggr)\circ - \circ\Biggl(\sum_j \bar{x}_j|j\rangle\langle j|\Biggr)
\eeq
In that case, of course, the density matrix $\sigma:= \sum_i |x_i|^2 P_i$ does not fully specify (\ref{eq:star2'}), so rather than the density matrix, the data needed is the pair consisting of all $x_i$'s and $P_i$'s. Taking all $x_i$'s such that $|x_i|=1$ then the operation:
\ctikzfig{s50ddbis}
takes the form of the original \em phases \em of ZX-calculus \cite{CD1, CD2, CKbook}. 
All spiders are equipped with phases, and more abstractly, they can be defined as certain Frobenius algebras \cite{CPV}.
%These phases exist for very general abstract reasons for all spiders, which themselves can be abstractly defined as certain Frobenius algebras \cite{CPV}.
In more recent versions of the ZX-calculus, more general phases are also allowed \cite{ng2018completeness, DBLP:conf/rc/CoeckeW18}, as these exist for equally general abstract reasons, and this then brings us to the general case of the phaser. 

\subsection{Normalisation for fuzz and phaser}

We have the following no-go theorem for the fuzz:

\begin{prop}\label{prop:normalisation2}
If the operation (\ref{eq:star1}) sends normalised density matrices to normalised density matrices, then it must be equal to a (partial) decoherence operation:
\[
\sum_i \Bigl(P_i\circ - \circ P_i\Bigr) 
\]
which retains all diagonal elements and sets off-diagonal ones to zero.
\end{prop}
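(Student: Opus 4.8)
The plan is to recognise that ``sends normalised density matrices to normalised density matrices'' means exactly that the operation (\ref{eq:star1}) is trace preserving on the trace-one positive matrices, and then to squeeze out the coefficients $x_i$ one at a time by feeding in cleverly chosen pure states. I take $\{P_i\}$ to be the spectral projectors, hence a complete family of mutually orthogonal projectors: $P_iP_j=\delta_{ij}P_i$ and $\sum_i P_i=I$.

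First I would simplify the trace of the output. By cyclicity of the trace and idempotency $P_i^2=P_i$,
\[
\Tr\Bigl(\sum_i x_i\,P_i\,\rho\,P_i\Bigr)\;=\;\sum_i x_i\,\Tr(P_i^2\rho)\;=\;\sum_i x_i\,\Tr(P_i\rho),
\]
so the hypothesis is equivalent to $\sum_i x_i\,\Tr(P_i\rho)=\Tr(\rho)$ holding for \emph{every} density matrix $\rho$. Next I would isolate a single coefficient $x_k$: choosing a unit vector $|v\rangle$ in the range of $P_k$ and setting $\rho=|v\rangle\langle v|$ (a legitimate normalised density matrix), orthogonality gives $\Tr(P_i\rho)=\langle v|P_i|v\rangle=\delta_{ik}$, so the identity collapses to $x_k=1$. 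Since $k$ is arbitrary and every spectral projector is nonzero, this forces $x_i=1$ for all $i$; substituting back turns (\ref{eq:star1}) into $\sum_i P_i\circ-\circ P_i$, which for rank-one $P_i=|i\rangle\langle i|$ is precisely the map that retains the diagonal entries and annihilates the off-diagonal ones.

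The argument is essentially a one-line trace computation followed by a judicious choice of test states, so there is no hard analytic obstacle; the content is entirely in reading the hypothesis correctly. The only points demanding care are, first, that preservation of normalisation is \emph{exactly} trace preservation (not positivity), so it is legitimate to probe it with the rank-one states $|v\rangle\langle v|$ supported inside a single spectral subspace, and second, the use of completeness $\sum_i P_i=I$: once $x_i=1$ is established, the surviving requirement $\Tr\bigl((\sum_i P_i)\rho\bigr)=\Tr(\rho)$ is automatic precisely because the spectral projectors resolve the identity. I would finish by emphasising the no-go character of the conclusion: normalisation leaves the fuzz no freedom whatsoever in the $x_i$ beyond the trivial decoherence value $1$.
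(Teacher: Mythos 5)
Your proof is correct and takes essentially the same approach as the paper: both reduce preservation of normalisation to trace preservation and use idempotency and orthogonality of the spectral projectors to force $x_i=1$. The paper compresses this into the operator identity ``$\sum_i x_i (P_i\circ P_i) = \sum_i x_i P_i$ is the identity, so $x_i=1$'', and your evaluation on test states $|v\rangle\langle v|$ with $|v\rangle$ in the range of each $P_k$ is precisely how that identity yields the conclusion.
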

\begin{proof}
By trace preservation $\sum_i x_i (P_i\circ P_i) = \sum_i x_i P_i$ is the identity, so $x_i=1$.
\end{proof}

For the phaser we have an even stronger result:

\begin{prop}\label{prop:normalisation1}
If the operation (\ref{eq:star2'}) sends normalised density matrices to normalised density matrices, then for all $i$ we have $|x_i|=1$. Taking the $x_i$'s to be positive reals, only the identity remains.
\end{prop}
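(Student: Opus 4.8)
The plan is to run the trace-preservation argument of Proposition~\ref{prop:normalisation2} for the fuzz, now keeping track of the complex conjugates that distinguish~(\ref{eq:star2'}). First I would package the operation as $\rho\mapsto D\rho D^\dagger$, where $D:=\sum_i x_i|i\rangle\langle i|$ is the diagonal operator with entries $x_i$, so that $D^\dagger=\sum_j \bar{x}_j|j\rangle\langle j|$. The hypothesis is exactly that $\Tr(D\rho D^\dagger)=1$ for every normalised $\rho$.

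Next I would compute the output trace. By cyclicity and the orthonormality of $\{|i\rangle\}_i$ one has $D^\dagger D=\sum_i|x_i|^2|i\rangle\langle i|$ (the off-diagonal terms cancel), whence
\[
\Tr\bigl(D\rho D^\dagger\bigr)=\Tr\bigl(D^\dagger D\,\rho\bigr)=\sum_i |x_i|^2\,\langle i|\rho|i\rangle,
\]
so the output trace is the $|x_i|^2$-weighted sum of the diagonal entries of $\rho$. The key step is then to probe this with the basis states: feeding in the normalised $\rho=|i\rangle\langle i|$ gives output trace exactly $|x_i|^2$, and since this must be $1$ we conclude $|x_i|=1$ for each $i$. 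Specialising to positive real $x_i$ then forces $x_i=1$ for all $i$, so $D=\mathbb{1}$ and the operation collapses to $\rho\mapsto\rho$, the identity.

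I expect no deep obstacle here: the whole content is that testing on computational-basis pure states already pins down every modulus $|x_i|$. The one point worth handling carefully is the passage, used implicitly in the fuzz proof, from ``preserves the trace of every density matrix'' to the algebraic condition $D^\dagger D=\mathbb{1}$ (trace preservation on all matrices); this is legitimate because the normalised density matrices affinely span the trace-one Hermitian matrices, so a linear trace functional vanishing on them vanishes on all Hermitian operators. The direct basis-state computation above avoids even this subtlety, which is why I would foreground it.
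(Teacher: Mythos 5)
Your proof is correct and takes essentially the same route as the paper: the paper's one-line argument likewise computes $\bigl(\sum_i x_i P_i\bigr)\circ\bigl(\sum_j \bar{x}_j P_j\bigr)=\sum_i |x_i|^2 P_i$ and invokes trace preservation to conclude this operator is the identity, whence $|x_i|=1$ and, for positive reals, $x_i=1$. Your basis-state probing merely makes explicit the step the paper leaves implicit (why trace preservation on normalised density matrices forces $D^\dagger D=\mathbb{1}$), which is a sound and slightly more careful rendering of the same argument.
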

\begin{proof}
By trace preservation $(\sum_i x_i P_i)\circ (\sum_j \bar{x}_j P_j) = \sum_i x_i\bar{x}_i (P_i\circ P_i) = \sum_i |x_i|^2 P_i$ is the identity, so $|x_i|=1$.
\end{proof}

It immediately follows from Propositions~\ref{prop:normalisation2} and \ref{prop:normalisation1} that the operations (\ref{eq:star1}) and (\ref{eq:star2'}) only preserve normalisation for a single trivial action both of $\raisebox{-1.5mm}{\epsfig{figure=FUZZZZZ.jpg,width=11pt}}$ and $\raisebox{-1.5mm}{\epsfig{figure=PHASERRR.jpg,width=11pt}}$. Of course, this was already the case for single projectors $P_A$, which will only preserve normalisation for fixed-points, so this result shouldn't come as a surprise. Hence, just like in quantum theory, one needs to re-normalise after each update if one insists on density matrices to be normalised.
%if $P_A(|\psi\rangle)=|\psi\rangle$. 

\subsection{Experimental evidence}\label{sec:exp}  

Both the fuzz and the phaser have recently been
%subjected to empirical scrutiny
numerically tested in their performance in modelling lexical entailment \cite{MarthaDot} (a.k.a.~hyponymy).
In \cite{MarthaDot} both fuzz and phaser are used to compose meanings of words in sentences, and it is explored how lexical entailment relationships propagate when doing so. The phaser performs particularly well, and seems to be very suitable when one considers more complex grammatical constructs. While these results were obtained within the context of DisCoCat, they also lift to the realm of DisCoCirc.

\section{Non-uniqueness and non-internalness}\label{sec:lotsofnon}

The above poses a dilemma; there are two candidates for meaning update mechanisms in DisCoCirc.
%This is very undesirable, 
This seems to indicate that a DisCoCirc-formalism entirely based on updating, subject to that update process being unique, is not achievable.
%, as there seem to be at least two very different update mechanisms which both make prefect sense within the context of language meaning. 

Moreover, it is easy to check (and well-known for the phaser \cite{horsman2017can}) that both  $\raisebox{-1.5mm}{\epsfig{figure=FUZZZZZ.jpg,width=11pt}}$ and $\raisebox{-1.5mm}{\epsfig{figure=PHASERRR.jpg,width=11pt}}$ fail to have basic algebraic properties such as associativity, so treating them as algebraic connectives is not useful either. But that was never really our intension anyway, given that the formal framework where meanings in DisCoCat and DisCoCirc live is the theory of monoidal categories \cite{CatsII, SelingerSurvey}. In these categories, we both have states and processes which transform these states. In the case that states are vectors these process typically are linear maps, and in the case that states are density matrices these processes typically are CP-maps. However, neither the fuzz nor the phaser is a CP-map on the input $\rho\otimes\sigma$, which can clearly be seen from the double occurrence of projectors in their outputs. In other words, these update mechanisms are not \em internal \em to the meaning category. This means that there is no clear `mathematical arena' where they live. 

We will now move to a richer meaning category where
the fuzz and phaser will be unified in a single construction which will become internal to the meaning category, as well as having a diagrammatic representation. 

\section{Pedalboard with double mixing}\label{sec:unification}

In order to unify fuzz $\raisebox{-1.5mm}{\epsfig{figure=FUZZZZZ.jpg,width=11pt}}$ and phaser $\raisebox{-1.5mm}{\epsfig{figure=PHASERRR.jpg,width=11pt}}$\,, and also to make them internal to the meaning category, we use the \em double density matrices \em (DDMs) of \cite{Ashoush, Zwart2017}. This is a new mathematical entity initially introduced within the context of NLP for capturing both lexical entailment and ambiguity within one structure \cite{Ashoush}. On the other hand, they are a natural progression from the density matrices introduced by von Neumann for formulating quantum theory \cite{vNdensity}. The key feature of DMMs for us is that they have two distinct modes of mixedness, for which we have the following:

\begin{theorem}  
DMMs enable one to unify fuzz and phaser in a combined update mechanism, where fuzz and phaser correspond to the two modes of mixedness of DDMs. In order to do so, meanings of propositions are generalised to being DMMs, and the update dot is then entirely made up of wires only:
\ctikzfig{double8} 
\end{theorem}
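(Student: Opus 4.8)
The plan is to realise a proposition's meaning $\sigma$, together with its spectral data $\{(x_i,P_i)\}$, as a \emph{double} density matrix (DDM), and then to exhibit the update dot as a pure spider/cup pattern living one level of doubling above ordinary density matrices; the two distinct modes of mixedness carried by a DDM will then be matched, respectively, with fuzz and phaser. Concretely, taking the projectors rank one, $P_i=|i\rangle\langle i|$, I would first un-double $\sigma$ to its purification $\omega_\sigma=\sum_i x_i|ii\rangle$ as in (\ref{eq:CPonly2}), and then apply the doubling construction a \emph{second} time. This produces a DDM whose legs split into two doubled pairs, identified with the two modes of \cite{Ashoush, Zwart2017}: an inner \emph{quantum} mode (coherences between the $|i\rangle$) and an outer \emph{classical} mode (a mixture over the index $i$). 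The ONB supplying every spider is inherited from the diagonalisation of $\sigma$, exactly as in Theorem~\ref{thm:phaserspider}, so that the weights $x_i$ enter only through the legs of this DDM.

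Next I would lift $\rho$ to a DDM by one further doubling --- a ``pure'' DDM with trivial outer mixedness --- and define the update dot as the spider fusing the matching legs of the two DDMs, i.e.\ a cap/cup pattern at the doubled level; the point to check is that, once $\sigma$ is presented this way, the $P$-box and the state $|x\rangle$ of Section~\ref{sec:pedals} are absorbed into its legs, leaving a diagram of spiders and wires only. The crux is then to evaluate this single diagram in two ways. Keeping the inner mode coherent and discarding the outer mode collapses the double mixture to a single coherent conjugation $Q\,\rho\,Q$ with $Q=\sum_i x_i P_i$, which is the phaser (\ref{eq:star2}); this reduction is precisely the rank-one spider identity already proved for Theorem~\ref{thm:phaserspider}. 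Keeping instead the outer mode as a genuine classical mixture while closing the coherences between distinct indices yields $\sum_i x_i\,(P_i\,\rho\,P_i)$, the fuzz (\ref{eq:star1}). Throughout one tracks how the real weights $x_i$ distribute across the two modes so as to reproduce the coefficient conventions of (\ref{pedal1}) and (\ref{pedal2}), using that the $x_i$ are real.

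Finally, since the whole diagram is assembled by doubling inside \cpm, the combined update is a genuine CP-map on the doubled inputs --- a morphism of \cpmsqc --- hence \emph{internal} to the meaning category, in contrast with the bare fuzz and phaser of Section~\ref{sec:lotsofnon}; this is what Section~\ref{sec:implementation} will exploit. The main obstacle is the evaluation just sketched: one must show that \emph{one and the same} wire-only diagram yields the two inequivalent maps solely according to how each mode of mixedness is resolved, with no spurious cross terms surviving. The delicate bookkeeping is which doubled legs are fused by the spider versus closed by a cap, and how the real weights are shared between the quantum and classical modes; checking it reduces, at each doubling level, to the rank-one spider computation of Theorem~\ref{thm:phaserspider}.
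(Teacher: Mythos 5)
Your proposal breaks down at the step you yourself flag as the crux: the plan to ``evaluate this single diagram in two ways,'' obtaining the phaser by keeping the inner mode coherent and the fuzz by resolving the outer mode as a classical mixture. A diagram in the meaning category denotes a unique CP-map, so one and the same wire-only diagram cannot yield two inequivalent maps --- and the two maps genuinely are inequivalent: for the same spectral data $\{(x_i,P_i)\}$ the phaser $\bigl(\sum_i x_i P_i\bigr)\circ - \circ\bigl(\sum_j x_j P_j\bigr)$ contains the cross terms $x_i x_j\, P_i \,\rho\, P_j$ with $i\neq j$, which the fuzz $\sum_i x_i \,P_i\,\rho\, P_i$ lacks. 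Relatedly, your construction of the DDM --- purify $\sigma$ to $\sum_i x_i|ii\rangle$ and ``double once more'' --- never pins down the one piece of data that matters, namely whether the weights $x_i$ land in the inner or the outer summation of the resulting DDM. That is not bookkeeping; it is precisely the choice that determines which of the two maps you get, so a single such construction cannot produce both.

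The paper's proof runs in the opposite logical direction and avoids this trap. It starts from a \emph{general} double-mixed DDM, bends its inner wires up to read it as a CP-map, and computes that any such DDM acts as $\sum_k y_k\bigl(\sum_i x_{ik}P_{ik}\bigr)\circ - \circ\bigl(\sum_j x_{jk}P_{jk}\bigr)$ (this is (\ref{eq:ddsymbolic}) and the manipulation following it). The unification then consists in observing that fuzz (\ref{eq:star1}) and phaser (\ref{eq:star2}) are \emph{distinct special cases} of this one expression: trivialise the two inner summations and the fuzz remains; trivialise the outer summation and the phaser remains. In other words, fuzz and phaser correspond to \emph{different DDMs}, while the update dot --- plain wires, by the wire-bending of (\ref{eq:double3}) --- is common to all of them. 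The correct slogan is not ``one DDM, two readings'' but ``one wire-only dot, with the two pedals sitting in the two modes of mixedness of whichever DDM you plug in.'' Two smaller deviations: you lift $\rho$ to a DDM as well, whereas in the theorem the actor's meaning stays an ordinary density matrix and the DDM proposition acts on it as a CP-map (this is what keeps iterated updating well-typed and internal); and the paper's computation rests on the algebra of the $\omega_{ik}$'s and spectral decomposition, not on the rank-one spider identity of Theorem \ref{thm:phaserspider}, which concerns the phaser alone.
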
 

We now define DDMs, and continue with the proof of the theorem.
Firstly, it is shown in \cite{Zwart2017} that there are two natural classes of DDMs, namely those arising from double dilation, and those arising from double mixing, and here we need the latter. While mixing can be thought of as passing from vectors to weighted sums of doubled vectors as follows (using the un-doubling representation): 
\[
|\phi\rangle\quad \leadsto \quad\sum_i x_i |\phi_i\rangle|\bar\phi_i\rangle
\]
\em double mixing \em means repeating that process once more \cite{Ashoush}:
\[
\sum_i x_i |\phi_i\rangle|\bar\phi_i\rangle \quad \leadsto \quad
\sum_{ijk} y_k x_{ik} x_{jk}
|\phi_{ik}\rangle|\bar\phi_{ik}\rangle  |\phi_{jk}\rangle|\bar\phi_{jk}\rangle  
\]
Setting $|\omega_{ik} \rangle := y_{k}^{1/4} x_{ik}^{1/2} |\phi_{ik} \rangle$ this becomes:
\[
\sum_{ijk} |\omega_{ik}\rangle|\bar\omega_{ik}\rangle  |\omega_{jk}\rangle|\bar\omega_{jk}\rangle
\]
and in diagrammatic notation akin to that of un-doubled density matrices, we obtain the following generic form for double density matrices \cite{Zwart2017}: 
\[ %\beq\label{eq:double1}
\tikzfig{double1} 
\] %\eeq
In order to relate DMMs to our discussion in Section \ref{sec:pedals}, we turn them into CP-maps in the un-doubled from of (\ref{eq:CPM4}), by bending up the inner wires:
 \beq\label{eq:double3} 
\tikzfig{double3}\ \ = \ \ \tikzfig{double5}       
\eeq
where:  
\[
\tikzfig{double5copy}\ \ := \ \ \tikzfig{double3copy}\qquad\qquad\quad 
\tikzfig{double5copybis}\ \ := \ \ \tikzfig{double3copybis}
\]
as well as the horizontal reflections of these. In order to see that this is indeed an instance of (\ref{eq:CPM4}), using fusion (\ref{eq:fusion}) we rewrite the spider in the RHS of (\ref{eq:double3}) as follows:
\[
\tikzfig{double5copycopy}
\]
These CP-maps take the concrete form:  
\beq\label{eq:ddsymbolic}
\sum_k \Biggl(\sum_i |\omega_{ik}\rangle\langle\omega_{ik}|\Biggr)\circ - \circ \Biggl(\sum_j |\omega_{jk}\rangle\langle\omega_{jk}|\Biggr) 
\eeq
where the $k$-summation sums corresponds to the spider and the other summations to the two connecting wires. Using the spectral decomposition of the density matrix $\sum_i|\omega_{ik}\rangle\langle\omega_{ik}|$ this can be rewritten as follows, where the $y_k$'s are arbitrary: 
\beqa
(\ref{eq:ddsymbolic})  
&=&
\sum_k \Biggl(\sum_i x_{ik}' P_{ik}\Biggr)\circ - \circ \Biggl(\sum_j x_{jk}' P_{jk}\Biggr)\\
&=&
\sum_k y_k \Biggl(\sum_i {x_{ik}'\over y_k} P_{ik}\Biggr)\circ - \circ \Biggl(\sum_j {x_{jk}'\over y_k} P_{jk}\Biggr)\\
&=&
\sum_k y_k \Biggl(\sum_i x_{ik} P_{ik}\Biggr)\circ - \circ \Biggl(\sum_j x_{jk} P_{jk}\Biggr)
\eeqa 
Now, this expression accommodates both the update mechanisms (\ref{eq:star1}) and (\ref{eq:star2}) as special cases, which are obtained by having either in the outer- or in the two inner-summations only a single index, and setting the corresponding scalar to $1$. Consequently, in this form, we can think of the doubled density matrices as a canonical generalisation of propositions that unifies fuzz and phaser.

By relying on idempotence of the projectors we obtain:
%\TODOb{We may want to start with this form, then assert that it lives in the CP-category i.e.~is internalised, and in a next section identify the dot by means of doubled density matrices.}
\[
(\ref{eq:ddsymbolic}) = \sum_k y_k \Biggl(\sum_i x_{ik} P_{ik}\circ P_{ik}\Biggr)\circ - \circ \Biggl(\sum_j x_{jk} P_{jk}\circ P_{jk}\Biggr)
\]
and we can now indicate the roles of fuzz and phaser diagrammatically, as follows:
\beq\label{eq:double6}  
\tikzfig{double6} 
\eeq
where the summations and corresponding scalars are represented by their respective pedals. Of course, this notation is somewhat abusive, as also the projectors are part of the fuzz and phaser. Also, while the phaser appears twice in this picture, there is only one, just like for a density matrix $|\psi\rangle\langle\psi|$ there are two occurrences of $|\psi\rangle$.

We now put (\ref{eq:double6}) in a form that exposes what the dot as in (\ref{eq:update}) is when taking meanings to be DMMs. Recalling the wire-bending we did in (\ref{eq:double3}) we have:
\[
\tikzfig{double6} \ \ = \ \ \tikzfig{double2}     
\]
so it indeed follows that the dot only contains plain wires, which completes the proof.

\begin{remark}
One question that may arise concerns the relationship of the decomposition of CP-maps (\ref{eq:double6}) and the Krauss decomposition of CP maps.
A key difference is that (\ref{eq:double6}) is a decomposition in terms of projections, involving two levels of sums, constituting it more refined or constrained than the more generic Krauss decomposition.
%This makes it more like a spectral decomposition for certain CP-maps.
This then leads to interesting questions, for example, regarding uniqueness of the projectors and coefficients arising from the spectral decompositions in our pedal-board construction. Furthermore, these coefficients may be used in a quantitative manner, e.g.~for extracting entropies as in \cite{calco2015}.
\end{remark}  

%\bB\begin{remark}
%Another questions concerns the composition of (\ref{eq:double6}) maps.  Evidently, when composing two such maps we won't necessarily get one back, as there is no reason for the resulting map to be positive, just like when we compose two density matrices, and just like the composite of two projections is typically not a projection. 
%%Of course, this composition is not one that takes place in the meaning category, as there DMMs are states. 
%\end{remark}\e

\section{Meaning category and physical realisation}\label{sec:implementation}  

We can still take as meaning category density matrices with CP-maps as processes. We can indeed think of a DMM as a density matrix of the form (\ref{eq:CPonly2}):
\ctikzfig{double1copy}
The update dot is a CP-map of the form (\ref{eq:CPM4}): 
\ctikzfig{double1copycopy}
In this way we obtain a meaning category for which updating is entirely internal.  

In previous work we already indicated the potential 
benefits of the implementation of standard natural language processing tasks modelled in DisCoCat on a quantum computer \cite{WillC}. One major upshot is the exponential space gain one obtains by 
encoding large vectors on many-qubit states. Another is the availability of quantum algorithm that yield quadratic speedup for tasks such as classification and question-answering. Moreover, the passage from vectors to density matrices is natural for a quantum computer, as any qubit can also be in a mixed state.
Double density matrices are then implemented as mixed entangled states.  
%\bR In fact, in that way we also capture the generalisations of Section \ref{sec:DD}. \e\TODOb{Kill?}

\section{Some examples}\label{sec:examples}  

\paragraph{Example 1: Paint it black.} Following Theorem \ref{thm:phaserspider}, the phaser $\raisebox{-1.5mm}{\epsfig{figure=PHASERRR.jpg,width=11pt}}$ can be expressed  as a spider-action (\ref{eq:spideractionthm}). The aim of this example is to illustrate how in this form non-commutative updating arises.  For the sake of clarity we will only consider rank-1 projectors rather than proper phasers, but this suffices for indicating how non-commutativity of update arises.
%\raisebox{-1.5mm}{\epsfig{figure=PHASERRR.jpg,width=11pt}}, expressed as a pair of an ONB $\{|p\rangle,|p^\perp\rangle \}$ inherited by the property $p$ and a settings-vector $|x\rangle$, imposes said property on an given input wire.
%Suppose, for the purposes of this toy example, that the noun is represented by a density matrix encoding the meaning of \texttt{door}, and that \raisebox{-1.5mm}{\epsfig{figure=PHASERRR.jpg,width=11pt}} is used to model painting a red door black.
The toy text for this example is:
\begin{center}
\texttt{Door turns red.\\ Door turns black.}
\end{center}
Diagrammatically we have:
\[  
\tikzfig{door_red_black1}
\]
One can think of {\tt turns} as an incarnation of {\tt is} with non-commutative capability.  Therefore it is of the form (\ref{eq:preupdate}), and reduces to (\ref{eq:update}), where the grey dots are spiders.  We take {\tt red} and {\tt black} to correspond to vectors $|r\rangle$ and $|b\rangle$, with induced density matrices $|r\rangle\langle r|$ and $|b\rangle\langle b|$. As they share the feature of both being colours, they are related (e.g.~according to a factual corpus) and won't be orthogonal:  
\ctikzfig{rbONBs} 
%\begin{center}
%{\epsfig{figure=black_red_vectors1.pdf,width=0.7\textwidth}}
%\end{center}
We can now build a ONB associated to {\tt red}, with $|r\rangle$ one of the basis vectors, and the other ones taken from $|r\rangle^\perp$, which results in red spiders representing {\tt red}, and then, taking $|x\rangle$ only to be non-zero for basis vector $|r\rangle$,  {\tt turns red} becomes: 
\[
\tikzfig{door_red_phaser} ~~=~~ \tikzfig{door_red_spider}
\]
Similarly, {\tt turns black} becomes:
\[
\tikzfig{door_black_phaser} ~~=~~ \tikzfig{door_black_spider} 
\]
Crucially, the red and black spiders won't commute since they are defined on different ONBs taken from $\{|r\rangle,|r\rangle^\perp \}$ and $\{|b\rangle,|b\rangle^\perp \}$ respectively. 

Concretely, we obtain two rank-1 projectors, $|r\rangle\langle r|$ and $|b\rangle\langle b|$ respectively, as desired.
Taking an initial  state for the door  only considering the door's colour: 
\[
\tikzfig{door_red_black4}  \ \ := \ |d\rangle\langle d| 
\]
we obtain: 
\[
\tikzfig{door_red_black3} \ \ = \ \Bigl(|r\rangle\langle r|\Bigr) \circ \Bigl(|d\rangle\langle d|\Bigr)  \circ \Bigl(|r\rangle\langle r|\Bigr)\ = \ z\, |r\rangle\langle r|
\] 
for some non-zero $z\in \mathbb{R}^+$, so now the door is {\tt red},  and also:
\[
\tikzfig{door_red_black2}  \ \ = \ \Bigl(|b\rangle\langle b|\Bigr) \circ \Bigl(z\,|r\rangle\langle r|\Bigr)  \circ \Bigl(|b\rangle\langle b|\Bigr)\ = \ z'\, |b \rangle\langle b|
\]
so now the door is {\tt black}, just like Mick Jagger wanted it to be.

%\bR [KILL?] \texttt{Door} might have other properties besides colour, such as shape, material, etc.
%So the operator space supporting the density matrix encoding the meaning of the noun
%factorises according to the attributable properties,
%$$\rho_\texttt{door}\in \mathcal{B}(H_\mathrm{colour})\otimes \mathcal{B}(H_\mathrm{other}), $$ with $\mathcal{B}(H_\mathrm{other}) = \mathcal{B}(H_\mathrm{shape}) \otimes \mathcal{B}(H_\mathrm{material}) \otimes \dots$.
%In this case of changing the colour, the pedal acts non-trivially on the colour space $\mathcal{B}(H_\mathrm{colour})$ and trivially on the other spaces $\mathcal{B}(H_\mathrm{other})$.\e

%In this example, we assume that the colour space has only two colour properties, \texttt{red} and \texttt{black}.
%Furthermore it is implied that the states are renormalised after an application of a pedal (the pedal has an amplifier built-in).
%Setting the pedal settings to $x_\texttt{red}=1$ and $x_\texttt{black}=0$ and applying the phaser, the density matrix is updated as
%\[
%\rho_\texttt{door} \leftarrow
%\rho_\texttt{door}
%\,\raisebox{-1.5mm}{\epsfig{figure=PHASERRR.jpg,width=11pt}}
%\left( P_\texttt{red} \otimes \mathbb{I} \otimes \mathbb{I} \right) =
%P_\texttt{red} \otimes P_\texttt{square} \otimes P_\texttt{wood}
%\]
%A second application of the phaser pedal with settings $x_\texttt{red}=0$ and $x_\texttt{black}=1$ updates the density matrix as
%\[
%\rho_\texttt{door} \leftarrow
%P_\texttt{black} \otimes P_\texttt{square} \otimes P_\texttt{wood}
%\]

\paragraph{Example 2: Black fuzztones.}  The aim of this example is to demonstrate the operation of the fuzz  $\raisebox{-1.5mm}{\epsfig{figure=FUZZZZZ.jpg,width=11pt}}$ in modelling ambiguous adjectives. Above we treated  {\tt black} as pure, but in fact, it is ambiguous in that, for example, it may refer to a colour as well as to an art-gerne. 
This kind of ambiguity is accounted for by the fuzz.
Disambiguation may take place when applying the ambiguous adjective  to an appropriate noun, for example:
\begin{center}
\texttt{black poem}\\
\texttt{black door}
\end{center}
or not, when the noun is lexically ambiguous as well, for example:  
\begin{center}
\texttt{black metal}
\end{center}
which may be an art-genre, namely the music-genre, or the material:   
\[
\epsfig{figure=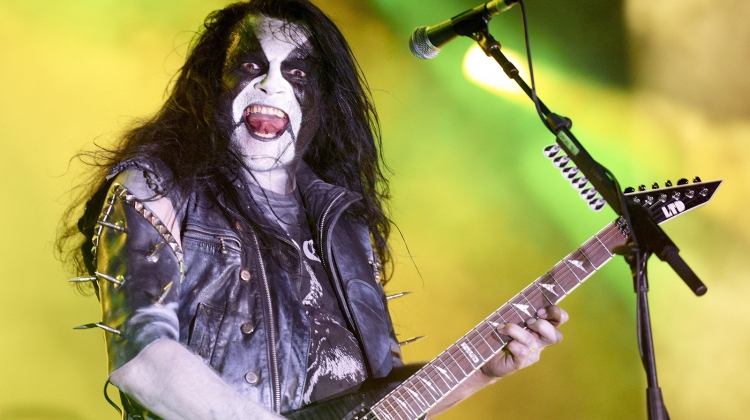,width=177pt}\qquad\qquad\epsfig{figure=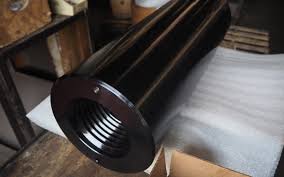,width=160pt}  
\]
This same ambiguity can propagate even further e.g.: 
\begin{center}
\texttt{black metal fan}
\end{center}
which clearly `demon'`strat'es the importance of the fuzz, as ambiguity is ubiquitous in natural language. The latter example  in fact also involves grammatical  and syntactical  ambiguity.
%but, as grammatical ambiguity is not within the scope of this paper, will be analysed with elsewhere.
This level of ambiguity is beyond the scope of this paper and will be studied elsewhere.

So now, besides {\tt black}$_{col}$ as defined above for colour, there is another use of it, namely {\tt black}$_{gen}$ for genre: 
\[
%\tikzfig{door_black_phaser-colour} ~~=~~ \tikzfig{door_black_spider} 
\qquad\qquad\tikzfig{door_black_phaser-style} ~~=~~ \tikzfig{door_skull_spider}     
\]
and we can represent the overall meaning as the following fuzz:
\[
\tikzfig{black_fuzz} ~~=~~ y\ \tikzfig{door_black_phaser-colour}\ \ +\ \ y' \ \tikzfig{door_black_phaser-style}
\]
that is, concretely:
\[
- ~\raisebox{-1.5mm}{\epsfig{figure=FUZZZZZ.jpg,width=11pt}}  ~ \sigma_{\mbox{\tt black}} = 
y\, \Bigl( | b_{col} \rangle\langle b_{col} |  \circ - \circ  | b_{col}\rangle\langle b_{col} | \Bigr)
\, +\, 
y'\, \Bigl( | b_{gen}\rangle\langle b_{gen} |  \circ - \circ  | b_{gen}\rangle\langle b_{gen}| \Bigr)
\]
where the ambiguity is induced by the adjective: 
\[
\sigma_{\mbox{\tt black}} := | b_{col}\rangle\langle b_{col}| + | b_{gen}\rangle\langle b_{gen}|  
\]

Inputting $\rho_\mathrm{poem}$, $\rho_\mathrm{door}$, and $\rho_\mathrm{metal}$, we expect empirically 
(from a factual corpus) that  the following terms will be very small:
\[
\Bigl( | b_{col}\rangle\langle b_{col} |  \circ \rho_\mathrm{poem} \circ  | b_{col}\rangle\langle b_{col} | \Bigr) 
\approx 0 \approx
\Bigl( | b_{gen}\rangle\langle b_{gen} |  \circ \rho_\mathrm{door} \circ  | b_{gen}\rangle\langle b_{gen}| \Bigr)
\]
while these will all be significant:
\[
\Bigl( | b_{col}\rangle\langle b_{col} |  \circ \rho_\mathrm{door} \circ  | b_{col}\rangle\langle b_{col} | \Bigr) \gg 0
\qquad 
\Bigl( | b_{gen}\rangle\langle b_{gen} |  \circ \rho_\mathrm{poem} \circ  | b_{gen}\rangle\langle b_{gen}| \Bigr) \gg 0
\]
\[
\Bigl( | b_{col}\rangle\langle b_{col} |  \circ \rho_\mathrm{metal} \circ  | b_{col}\rangle\langle b_{col} | \Bigr) \gg 0
\qquad 
\Bigl( | b_{gen}\rangle\langle b_{gen} |  \circ \rho_\mathrm{metal} \circ  | b_{gen}\rangle\langle b_{gen}| \Bigr) \gg 0 
\]
%Then, assuming $y_\mathrm{style}=y_\mathrm{colour}=1/2$ inputting $\rho_\mathrm{poem}$, $\rho_\mathrm{door}$, and $\rho_\mathrm{metal}$, we expect empirically (from a factual corpus) that
%\begin{center}
%$$P_\mathrm{colour} \rho_\mathrm{poem} P_\mathrm{colour} \approx 0$$
%$$P_\mathrm{style} \rho_\mathrm{poem} P_\mathrm{style} >> 0$$
%$$P_\mathrm{colour} \rho_\mathrm{door} P_\mathrm{colour} >> 0$$
%$$P_\mathrm{style} \rho_\mathrm{door} P_\mathrm{style} \approx 0$$
%$$P_\mathrm{colour} \rho_\mathrm{metal} P_\mathrm{colour} >> 0$$
%$$P_\mathrm{style} \rho_\mathrm{metal} P_\mathrm{style} >> 0$$
%\end{center}
That is, \texttt{poem} and \texttt{door} are  unambiguous nouns
that  disambiguate the ambiguous adjective \texttt{black}, while 
%In particular, in the case of $\texttt{black door}$ for which we have $P_\mathrm{colour} \rho_\mathrm{door} P_\mathrm{colour} \gg P_\mathrm{style} \rho_\mathrm{door} P_\mathrm{style} \approx 0$ we recover the application of the black spider of Example 1.
$\texttt{metal}$ is ambiguous before
and remains ambiguous after the application of the adjective \texttt{black} on it.   

\section{Outro} 

In this paper we proposed update mechanisms for DisCoCirc, in terms of fuzz and phaser,  which in the real world look something like this:
\begin{center}
\raisebox{-6mm}{\epsfig{figure=FUZZZZZ.jpg,width=30pt}}\ \ :=\ \ \raisebox{-9mm}{\epsfig{figure=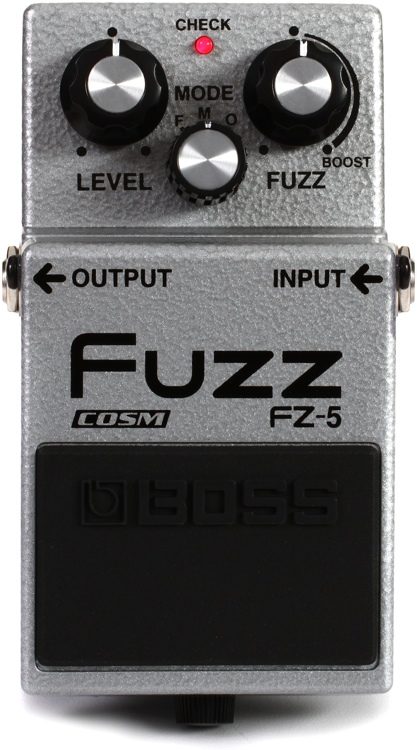,width=30pt}}
\qquad\qquad\qquad\qquad\qquad
\raisebox{-6mm}{\epsfig{figure=PHASERRR.jpg,width=30pt}}\ \ :=\ \ \raisebox{-9mm}{\epsfig{figure=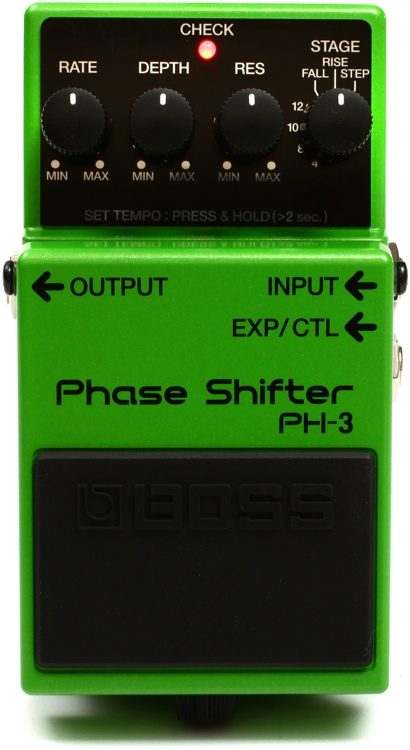,width=30pt}}
\end{center}
We unified them within a single diagrammatically elegant update mechanism by employing double density matrices.
In this way we upgraded the commutative spiders used in DisCoCat to non-commutative ones that respect the temporal order of the sentences within a text.  The commutative spiders consist a special case of the more general phaser. At the same time, the fuzz models lexical ambiguity.  

%Two interesting artefacts are that the Leifer-Spekkens operation \cite{Leifer1, leifer2013towards, CoeckeSpekkens2012} is closely related to the commutative spiders of categorical quantum mechanics, and that it can be given an elegant mathematical status by passing to double density matrices.

One might consider employing the double density matrix formalism to contribute to a theory of quantum Bayesian inference.
Vice versa, fully incorporating inference within the diagrammatic formalism of quantum theory would aid in successfully modeling tasks in natural language processing as well as cognition.

Furthermore, since double density matrices can be described by standard density matrices with post-selection,
the update formalism we have defined here can in principle be implemented in quantum hardware.
Therefore, our framework provides a source for small- and large-scale experiments in the novel field of quantum natural language processing.

Finally, one might wonder whether an `anatomy of completely positive maps' can be performed by means of double density matrices, 
potentially providing a compact framework in which to study quantum channels.

\bibliographystyle{plain}
\bibliography{main}

\begin{thebibliography}{10}

\bibitem{Ajdukiewicz}
K.~Ajdukiewicz.
\newblock Die syntaktische konnexit\"at.
\newblock {\em Studia Philosophica}, 1:1--27, 1937.

\bibitem{Ashoush}
D.~Ashoush and B.~Coecke.
\newblock {Dual Density Operators and Natural Language Meaning}.
\newblock {\em Electronic Proceedings in Theoretical Computer Science},
  221:1--10, 2016.
\newblock arXiv:1608.01401.

\bibitem{EsmaSC}
E.~Balkir, M.~Sadrzadeh, and B.~Coecke.
\newblock {\em Distributional Sentence Entailment Using Density Matrices},
  pages 1--22.
\newblock Springer International Publishing, Cham, 2016.

\bibitem{bankova2016graded}
D.~Bankova, B.~Coecke, M.~Lewis, and D.~Marsden.
\newblock Graded entailment for compositional distributional semantics.
\newblock {\em arXiv preprint arXiv:1601.04908}, 2018.
\newblock Accepted for publication.

\bibitem{BennettShor}
C.~H. Bennett and P.~W. Shor.
\newblock Quantum information theory.
\newblock {\em IEEE Trans. Inf. Theor.}, 44(6):2724--2742, September 2006.

\bibitem{BvN}
G.~Birkhoff and J.~von Neumann.
\newblock The logic of quantum mechanics.
\newblock {\em Annals of Mathematics}, 37:823--843, 1936.

\bibitem{blacoe2013quantum}
W.~Blacoe, E.~Kashefi, and M.~Lapata.
\newblock A quantum-theoretic approach to distributional semantics.
\newblock In {\em Procs.~of the 2013 Conf.~of the North American Chapter of the
  Association for Computational Linguistics: Human Language Technologies},
  pages 847--857, 2013.

\bibitem{ConcSpacI}
J.~Bolt, B.~Coecke, F.~Genovese, M.~Lewis, D.~Marsden, and R.~Piedeleu.
\newblock Interacting conceptual spaces {I}: Grammatical composition of
  concepts.
\newblock In M.~Kaipainen, A.~Hautam\"aki, P.~G\"ardenfors, and F.~Zenker,
  editors, {\em Concepts and their Applications}, Synthese Library, Studies in
  Epistemology, Logic, Methodology, and Philosophy of Science. Springer, 2018.
\newblock to appear.

\bibitem{CoeckeText}
B.~Coecke.
\newblock The mathematics of text structure, 2019.
\newblock arXiv:1904.03478.

\bibitem{CD1}
B.~Coecke and R.~Duncan.
\newblock Interacting quantum observables.
\newblock In {\em Proceedings of the 37th International Colloquium on Automata,
  Languages and Programming (ICALP)}, Lecture Notes in Computer Science, 2008.

\bibitem{CD2}
B.~Coecke and R.~Duncan.
\newblock Interacting quantum observables: categorical algebra and
  diagrammatics.
\newblock {\em New Journal of Physics}, 13:043016, 2011.
\newblock {arXiv:quant-ph/09064725}.

\bibitem{DBLP:conf/wollic/CoeckeGLM17}
B.~Coecke, F.~Genovese, M.~Lewis, and D.~Marsden.
\newblock Generalized relations in linguistics and cognition.
\newblock In Juliette Kennedy and Ruy J. G.~B. de~Queiroz, editors, {\em Logic,
  Language, Information, and Computation - 24th International Workshop, WoLLIC
  2017, London, UK, July 18-21, 2017, Proceedings}, volume 10388 of {\em
  Lecture Notes in Computer Science}, pages 256--270. Springer, 2017.

\bibitem{LambekvsLambek}
B.~Coecke, E.~Grefenstette, and M.~Sadrzadeh.
\newblock Lambek vs. {L}ambek: Functorial vector space semantics and string
  diagrams for {L}ambek calculus.
\newblock {\em Annals of Pure and Applied Logic}, 164:1079--1100, 2013.
\newblock arXiv:1302.0393.

\bibitem{CKpaperI}
B.~Coecke and A.~Kissinger.
\newblock Categorical quantum mechanics {I}: causal quantum processes.
\newblock In E.~Landry, editor, {\em Categories for the Working Philosopher}.
  Oxford University Press, 2016.
\newblock ar{X}iv:1510.05468.

\bibitem{CQMII}
B.~Coecke and A.~Kissinger.
\newblock Categorical quantum mechanics {II}: Classical-quantum interaction.
\newblock {\em International Journal of Quantum Information}, 14(04):1640020,
  2016.

\bibitem{CKbook}
B.~Coecke and A.~Kissinger.
\newblock {\em Picturing Quantum Processes. A First Course in Quantum Theory
  and Diagrammatic Reasoning}.
\newblock Cambridge University Press, 2017.

\bibitem{CLM}
B.~Coecke, M.~Lewis, and D.~Marsden.
\newblock Internal wiring of cartesian verbs and prepositions.
\newblock In M.~Lewis, B.~Coecke, J.~Hedges, D.~Kartsaklis, and D.~Marsden,
  editors, {\em {\rm Procs.~of the 2018 Workshop on} Compositional Approaches
  in Physics, NLP, and Social Sciences}, volume 283 of {\em Electronic
  Proceedings in Theoretical Computer Science}, pages 75--88, 2018.

\bibitem{CatsII}
B.~Coecke and {\'E}.~O. Paquette.
\newblock Categories for the practicing physicist.
\newblock In B.~Coecke, editor, {\em New Structures for Physics}, Lecture Notes
  in Physics, pages 167--271. Springer, 2011.
\newblock {a}rXiv:0905.3010.

\bibitem{CPaqPav}
B.~Coecke, {\'E}.~O. Paquette, and D.~Pavlovi{\'c}.
\newblock {Classical and quantum structuralism}.
\newblock In S.~Gay and I.~Mackie, editors, {\em Semantic Techniques in Quantum
  Computation}, pages 29--69. Cambridge University Press, 2010.
\newblock {a}rXiv:0904.1997.

\bibitem{CPV}
B.~Coecke, D.~Pavlovi{\'c}, and J.~Vicary.
\newblock A new description of orthogonal bases.
\newblock {\em Mathematical Structures in Computer Science, to appear},
  23:555--567, 2013.
\newblock {a}rXiv:quant-ph/0810.1037.

\bibitem{CSC}
B.~Coecke, M.~Sadrzadeh, and S.~Clark.
\newblock Mathematical foundations for a compositional distributional model of
  meaning.
\newblock In J.~van Benthem, M.~Moortgat, and W.~Buszkowski, editors, {\em A
  Festschrift for Jim Lambek}, volume~36 of {\em Linguistic Analysis}, pages
  345--384. 2010.
\newblock ar{x}iv:1003.4394.

\bibitem{CoeckeSpekkens2012}
B.~Coecke and R.~W. Spekkens.
\newblock Picturing classical and quantum bayesian inference.
\newblock {\em Synthese}, 186(3):651--696, 2012.

\bibitem{DBLP:conf/rc/CoeckeW18}
B.~Coecke and Q.~Wang.
\newblock {ZX}-rules for 2-qubit clifford+{T} quantum circuits.
\newblock In Jarkko Kari and Irek Ulidowski, editors, {\em Reversible
  Computation - 10th International Conference, {RC} 2018, Leicester, UK,
  September 12-14, 2018, Proceedings}, volume 11106 of {\em Lecture Notes in
  Computer Science}, pages 144--161. Springer, 2018.

\bibitem{GrefSadr}
E.~Grefenstette and M.~Sadrzadeh.
\newblock Experimental support for a categorical compositional distributional
  model of meaning.
\newblock In {\em The 2014 Conference on Empirical Methods on Natural Language
  Processing.}, pages 1394--1404, 2011.
\newblock ar{X}iv:1106.4058.

\bibitem{Grishin}
V.N. Grishin.
\newblock On a generalization of the {A}jdukiewicz-{L}ambek system.
\newblock In {\em Studies in nonclassical logics and formal systems}, pages
  315--334. Nauka, Moscow, 1983.

\bibitem{harris1954distributional}
Z.~S. Harris.
\newblock Distributional structure.
\newblock {\em Word}, 10(2-3):146--162, 1954.

\bibitem{horsman2017can}
D.~Horsman, C.~Heunen, M.~F. Pusey, J.~Barrett, and R.~W. Spekkens.
\newblock Can a quantum state over time resemble a quantum state at a single
  time?
\newblock {\em Proceedings of the Royal Society A: Mathematical, Physical and
  Engineering Sciences}, 473(2205):20170395, 2017.

\bibitem{KampPartee1995}
H.~Kamp and B.~Partee.
\newblock Prototype theory and compositionality.
\newblock {\em Cognition}, 57:129--191, 1995.

\bibitem{DimitriDPhil}
D.~Kartsaklis.
\newblock {\em Compositional Distributional Semantics with Compact Closed
  Categories and Frobenius Algebras}.
\newblock PhD thesis, University of Oxford, 2014.

\bibitem{KartSadr}
D.~Kartsaklis and M.~Sadrzadeh.
\newblock Prior disambiguation of word tensors for constructing sentence
  vectors.
\newblock In {\em The 2013 Conference on Empirical Methods on Natural Language
  Processing.}, pages 1590--1601. ACL, 2013.

\bibitem{KartsaklisSadrzadeh2014}
D.~Kartsaklis and M.~Sadrzadeh.
\newblock A study of entanglement in a categorical framework of natural
  language.
\newblock In {\em Proceedings of the 11th Workshop on Quantum Physics and Logic
  (QPL)}. Kyoto ‚Japan, 2014.

\bibitem{Lambek0}
J.~Lambek.
\newblock The mathematics of sentence structure.
\newblock {\em American Mathematics Monthly}, 65, 1958.

\bibitem{LambekBook}
J.~Lambek.
\newblock From word to sentence.
\newblock {\em Polimetrica, Milan}, 2008.

\bibitem{Leifer1}
M.~S. Leifer and D.~Poulin.
\newblock Quantum graphical models and belief propagation.
\newblock {\em Annals of Physics}, 323(8):1899--1946, 2008.

\bibitem{leifer2013towards}
M.~S. Leifer and R.~W. Spekkens.
\newblock Towards a formulation of quantum theory as a causally neutral theory
  of bayesian inference.
\newblock {\em Physical Review A}, 88(5):052130, 2013.

\bibitem{MarthaDot}
M.~Lewis.
\newblock Modelling hyponymy for discocat, 2019.
\newblock Proceedings of ACT 2019.

\bibitem{MarthaNeg}
M.~Lewis.
\newblock Towards negation in discocat, 2019.
\newblock Proceedings of SemSpace 2019.

\bibitem{marsden2017custom}
Dan Marsden and Fabrizio Genovese.
\newblock Custom hypergraph categories via generalized relations.
\newblock In {\em 7th Conference on Algebra and Coalgebra in Computer Science
  (CALCO 2017)}. Schloss Dagstuhl-Leibniz-Zentrum fuer Informatik, 2017.

\bibitem{ng2018completeness}
K.~F. Ng and Q.~Wang.
\newblock Completeness of the zx-calculus for pure qubit clifford+ t quantum
  mechanics.
\newblock {\em arXiv preprint arXiv:1801.07993}, 2018.

\bibitem{RobinMSc}
R.~Piedeleu.
\newblock Ambiguity in categorical models of meaning.
\newblock Master's thesis, University of Oxford, 2014.

\bibitem{calco2015}
Robin Piedeleu, Dimitri Kartsaklis, Bob Coecke, and Mehrnoosh Sadrzadeh.
\newblock Open system categorical quantum semantics in natural language
  processing.
\newblock In {\em 6th Conference on Algebra and Coalgebra in Computer Science
  (CALCO 2015)}. Schloss Dagstuhl-Leibniz-Zentrum fuer Informatik, 2015.

\bibitem{FrobMeanI}
M.~Sadrzadeh, S.~Clark, and B.~Coecke.
\newblock The {F}robenius anatomy of word meanings {I}: subject and object
  relative pronouns.
\newblock {\em Journal of Logic and Computation}, 23:1293--1317, 2013.
\newblock ar{X}iv:1404.5278.

\bibitem{FrobMeanII}
M.~Sadrzadeh, S.~Clark, and B.~Coecke.
\newblock The {F}robenius anatomy of word meanings {II}: possessive relative
  pronouns.
\newblock {\em Journal of Logic and Computation}, 26:785--815, 2016.
\newblock arXiv:1406.4690.

\bibitem{SelingerCPM}
P.~Selinger.
\newblock Dagger compact closed categories and completely positive maps.
\newblock {\em Electronic Notes in Theoretical Computer Science}, 170:139--163,
  2007.

\bibitem{SelingerSurvey}
P.~Selinger.
\newblock A survey of graphical languages for monoidal categories.
\newblock In B.~Coecke, editor, {\em New Structures for Physics}, Lecture Notes
  in Physics, pages 275--337. Springer-Verlag, 2011.
\newblock {a}rXiv:0908.3347.

\bibitem{vNdensity}
J.~von Neumann.
\newblock Wahrscheinlichkeitstheoretischer aufbau der quantenmechanik.
\newblock {\em Nachrichten von der Gesellschaft der Wissenschaften zu
  G{\"o}ttingen, Mathematisch-Physikalische Klasse}, 1:245--272, 1927.

\bibitem{vN}
J.~von Neumann.
\newblock {\em Mathematische grundlagen der quantenmechanik}.
\newblock Springer-Verlag, 1932.
\newblock Translation, {\it Mathematical foundations of quantum mechanics},
  Princeton University Press, 1955.

\bibitem{Widdows}
D.~Widdows.
\newblock Orthogonal negation in vector spaces for modelling word-meanings and
  document retrieval.
\newblock In {\em 41st Annual Meeting of the Association for Computational
  Linguistics}, Japan, 2003.

\bibitem{widdows2003word}
D.~Widdows and S.~Peters.
\newblock Word vectors and quantum logic: Experiments with negation and
  disjunction.
\newblock {\em Mathematics of language}, 8(141-154), 2003.

\bibitem{WillC}
W.~Zeng and B.~Coecke.
\newblock Quantum algorithms for compositional natural language processing.
\newblock {\em Electronic Proceedings in Theoretical Computer Science}, 221,
  2016.
\newblock arXiv:1608.01406.

\bibitem{Zwart2017}
M.~Zwart and B.~Coecke.
\newblock {Double Dilation $\neq$ Double Mixing}.
\newblock 2017.
\newblock arXiv:1704.02309.

\end{thebibliography}

\end{document}